\newcommand{\sd}{\Sigma\Delta}
\newcommand{\R}{\mathbb{R}}
\newcommand{\A}{\mathcal{A}}
\DeclareMathOperator{\diag}{diag}
\DeclareMathOperator{\supp}{supp}
\DeclareMathOperator{\rank}{rank}
\theoremstyle{plain}
\newtheorem{thm}{\protect\theoremname}
\newtheorem{lem}[thm]{\protect\lemname}
\theoremstyle{definition}
\newtheorem{defn}[thm]{\protect\definitionname}
\newtheorem{remark}{Remark}
\theoremstyle{plain}
\theoremstyle{plain}
\newtheorem{cor}[thm]{\protect\corollaryname}
\theoremstyle{definition}
\providecommand{\corollaryname}{Corollary}
\providecommand{\propname}{Proposition}
\providecommand{\definitionname}{Definition}
\providecommand{\theoremname}{Theorem}
\providecommand{\lemname}{Lemma}
\providecommand{\assumptionname}{Assumption}
\providecommand{\corollaryname}{Corollary}
\providecommand{\definitionname}{Definition}
\providecommand{\theoremname}{Theorem}
\newcommand{\M}{\mathcal M}
\renewcommand{\P}{\mathbb P}
\newcommand{\SD}{\Sigma\Delta}
\providecommand{\ep}{\epsilon}
\begin{document}

\title{Quantization for Low-Rank Matrix Recovery}
\author{Eric Lybrand, Rayan Saab
 \thanks{Eric Lybrand and Rayan~Saab are with the Mathematics Department at the University
    of California, San Diego.%
  }}

\maketitle
\begin{abstract}
We study Sigma-Delta \((\SD)\) quantization methods coupled with appropriate reconstruction algorithms for digitizing randomly sampled low-rank matrices. We show that the reconstruction error associated with our methods decays polynomially with the oversampling factor, and we leverage our results to obtain root-exponential accuracy by optimizing over the choice of quantization scheme.  Additionally, we show that a random encoding scheme, applied to the quantized measurements, yields a near-optimal exponential bit-rate. As an added benefit, our schemes are robust both to noise and to deviations from the low-rank assumption. In short, we provide a full generalization of analogous results, obtained in the classical setup of bandlimited function acquisition, and more recently, in the finite frame and compressed sensing setups to the case of low-rank matrices sampled with sub-Gaussian linear operators. Finally, we believe our techniques for generalizing results from the compressed sensing setup to the analogous low-rank matrix setup is applicable to other quantization schemes. 

\end{abstract}
\begin{IEEEkeywords}
Compressed sensing, quantization, exponential accuracy, rate-distortion, low-rank, one-bit
\end{IEEEkeywords}

\section{Introduction}

Let $\M: \R^{n_1\times n_2} \to\R^m $ be a linear map that acts on matrices $X$ to produce measurements \begin{equation}y= \M(X) = \sum\limits_{i=1}^m \langle X, A_i \rangle e_i\label{eq:lin_op},\end{equation} where the vectors $e_i$ are the standard basis vectors for $\R^m$, and each $A_i$ is a matrix in $\R^{n_1\times n_2}$, $i\in \{1,...,m\}.$ Here, the inner product is the standard Hilbert-Schmidt inner product given by $\langle Y,Z \rangle = \sum_{i,j} Y_{ij}Z_{ij}$. 
Note that for every linear operator $\M$ as above, there exists an $m \times (n_1 n_2)$ matrix $A_\M$, such that for all $X \in \R^{n_1\times n_2}$ $$\M(X) = A_\M \vec{X}.$$
Here $\vec{X} \in \R^{n_1n_2}$, the {vectorized} version of the matrix $X$, is obtained by stacking the columns of $X$. Low-rank matrix recovery is concerned with approximating a rank $k$ matrix $X$ from $y$, knowing the operator $\M$. It is primarily interesting in the regime where $m \ll n_1 n_2$, and many recent results propose recovery algorithms and prove recovery guarantees when $m \geq C k\max\{n_1,n_2\}$ where \(C > 0\) is some absolute constant \cite{candes2011tight, CR09, KRM10, RFP07}. For example, when the entries of the matrix representation $A_\M$ are independent Gaussian or sub-Gaussian random variables, and $y = \M(X) + e$ with $\|e\|_2 \leq \varepsilon$, one can solve the convex optimization problem
\begin{equation}
{X^\sharp}:=\arg\min_Z \| Z \|_* \quad \text{subject to } \quad \| \M(Z) - y \|_2 \leq \varepsilon. 
\end{equation}
Then, with high probability on the draw of $\M$, and uniformly for all $n_1\times n_2$ matrices $X$, we have 
\begin{equation}
\| X^{\sharp} - X\|_F \leq C\left(\frac{\sigma_k(X)_*}{\sqrt{k}} + \varepsilon \right),
\end{equation}
as shown in, e.g., \cite{candes2011tight}. Above, \( \| \cdot \|_F \) denotes the Frobenius norm \(\|X\|_F = \sqrt{\sum_{i,j}A_{i,j}^2}\) on matrices induced by the Hilbert-Schmidt inner product, $\| Z\|_*$ denotes the nuclear norm of $Z$, i.e., the sum of its singular values, and \[\sigma_k(Z)_*:= \min\limits_{\rank(V)=k} \| Z - V\|_*\]
denotes the error,  measured in the nuclear norm, associated with the best rank $k$ approximation of a matrix.%

\subsection{Background and prior work}
Low-rank matrix recovery has seen a wide range of applications, ranging from quantum state tomography \cite{GLFB10} and collaborative filtering \cite{HPlabs08} to  sensor localization \cite{sensor10} and face recognition \cite{candesPCA}, to name a few.

Nuclear norm minimization was  proposed by Fazel in \cite{fazel2002} as a means of finding a matrix of minimial rank in a given convex set. Fazel motivates this through the observation that nuclear norm minimization is the convex relaxation of the rank minimization problem, which has been shown to be NP-hard.  Since then, and with the advent of compressed sensing, there has been much work on recovering low-rank matrices from linear measurements. For example, \cite{RFP07} considers recovering low-rank matrices given random linear measurements, and establishes recovery guarantees given that the sampling scheme satisfies the matrix restricted isometry property, which we define in the subsequent section.  Perhaps not surprisingly, this analysis closely follows that of sparse vector recovery under \( \ell_1\) minimization, as it is known that random ensembles of linear maps satisfy the matrix restricted isometry property with high probability \cite{FR13, RFP07}.  This led to a flurry of papers on nuclear norm minimization for matrix recovery in various contexts, see \cite{candesPCA, lin10, CT10} for example. 

While the theoretical results on nuclear norm minimization have been promising, convex optimization practically necessitates the use of digital computers for recovering the underlying matrix. It behooves the theory, therefore, to take into account that the measurements must be converted to bits so that numerical solvers can handle them. Indeed, quantization is the necessary step in data acquisition by which measurements taking values in the continuum are mapped to discrete sets. Without any claim to comprehensiveness, we are aware of the following developments on quantization in the low-rank matrix completion setting, i.e., the setting where one quantizes a random subset of the entries of the matrix directly. 

Davenport and coauthors in \cite{davenport14} consider recovering a rank $k$ matrix $X\in \R^{n_1\times n_2}$ given \(1\)-bit measurements of a subset of the entries, sampled according to a distribution that \emph{may depend on the entries}. They recover an estimate $\hat{X}$ of the sampled matrix through maximum likelihood estimation with a nuclear norm constraint and derive error bounds which decay, as a function of the number of measurements $m$, like \( O(m^{-1/2}) \).

Shortly thereafter Cai and Zhou in \cite{CZ13} consider reconstruction given 1-bit measurements of the entries under more general sampling schemes of the indices. Unlike the argument in \cite{davenport14}, Cai and Zhou impose a max-norm constraint on the maximum likelihood estimation to enforce the low-rank condition. Under this regime the scaled Frobenius norm error decay is also \( O(m^{-1/2}) \).

Bhaskar and Javanmard in \cite{BJ15} modify the optimization problem of \cite{davenport14} so that it now imposes an exact rank constraint in place of the nuclear norm. This yields a non-convex problem with associated computational challenges. Nevertheless, assuming one can solve this hard optimization problem, they obtain an error estimate that decays like \( O(m^{-4}) \), at the added cost of a much increased constant that scales like $n^7k^3$.
 

Proceeding towards more general quantization alphabets, \cite{LSB14} consider low-rank matrix completion via nuclear norm penalized maximum liklihood estimation given quantized measurements of the entries with unknown quantization bin boundaries. They propose an optimization procedure which learns the quantization bin boundaries and recovers the matrix in an alternating fashion. No theoretical guarantees are given to delineate the relationship between the number of measurements and the reconstruction error.

Authors in \cite{LKMS14} propose a low-rank matrix recovery algorithm given quantized measurements of the entries from a finite alphabet under some sampling distribution of the indices. As the aforementioned schemes have done, they propose a maximum liklihood estimation but with a nuclear norm constraint to enforce the low-rank condition.  Specifically, given \( m \ge C \max\{n_1, n_2\} \log(\max\{n_1, n_2\}) \) measurements where \(C > 0\) is a universal constant, they show that the scaled Frobenius error decays like 
\(m^{-1}\).

In contrast to the above works, we study the quantization problem in the low-rank matrix recovery setting given linear measurements of the form \eqref{eq:lin_op}, where the matrices $A_i$ are sub-Gaussian. 

\subsection{Contributions}
To the best of our knowledge, we provide the first theoretical guarantees of low-rank matrix recovery from \(\SD\) quantized sub-Gaussian linear measurements. Our result holds for stable \(\SD\) quantizers (defined in Section \ref{sec:SDprelims}) of arbitrary order and our bounds apply to the particular case of $1$-bit quantization; that is, we can recover scaling information in this setting.
Thus, we generalize a result from \cite{SWY15} that recovers sparse vectors from quantized  noisy measurements so that it now applies to the low-rank matrix setting, as shown in Theorem \ref{thm: main}. Our main tool for achieving this extension is a modification of the technique of Oymak et al. \cite{oymak2011simplified} for converting compressed sensing results to the low-rank matrix setting. We show that the reconstruction error under constrained nuclear norm minimization is bounded by
\begin{align*}
\|{X^\sharp}-X\|_F \leq C\left(\left(\frac{m}{\ell}\right)^{-r+1/2}\beta+\frac{\sigma_k(X)_*}{\sqrt k}+\sqrt{\frac{m}{\ell}}\epsilon \right)
\end{align*}
thus showing that our reconstruction scheme is robust to noise and to the low-rank assumption. Above, \( r \) denotes the order of the \( \SD \) scheme and \(\beta \) the step-size of the associated alphabet (see Section \ref{sec:SDprelims}), \( \ell \) is of order \( k \max\{n_1,n_2\} \), and \( \frac{m}{\ell} \)  denotes the oversampling factor.  Note that in the case of rank $k$ matrices, with no measurement noise, our reconstruction error decays polynomially fast, namely as $m^{-r}$, thereby greatly improving on the rates obtained in the works cited above. Furthermore, by optimizing over the order of the \( \SD \) reconstruction scheme, we show in Corollary \ref{cor: root exp} that our procedure attains root-exponential accuracy with respect to the oversampling factor. This generalizes the error decay seen in \cite{SWY15} for vectors.
   
The robustness of the main result extends beyond quantization. We show in Corollary \ref{cor: encoding} that we can further reduce the total number of bits, by encoding the quantized measurements using a discrete Johnson-Lindenstrauss \cite{JL84} embedding into a lower dimensional space. 
The resulting dramatic reduction in bit-rate is coupled with only a small increase in reconstruction error. This, in turn yields an exponentially decaying, i.e., optimal, relationship between number of bits and reconstruction error. 

Finally, we remark that the techniques used herein can be used to derive analogous results for other quantization schemes that share certain properties of \( \SD \) quantization. Namely, suppose one is given a quantization map \( \mathcal{Q} \) and a bijective linear map \( T: \R^n \to \R^n \) which satisfy \( \| T(y - \mathcal{Q}(y)) \| < C \) for some norm \( \| \cdot \| \) and some constant \( C \) that may depend on the quantization technique but not on the dimensions. Then, the proof of Theorem \ref{thm: main}, with a suitably altered decoder, can likely be modified to produce an analogous result for the new quantization scheme.

\section{Preliminaries}
\subsection{Notation}
For \(x \in \R^n\), let \(\supp(x)\) denote the set of indices \(i\) for which \(x_i\) is non-zero, and $\Sigma_k^n:=\{x \in \R^n, |\supp(x)| \leq k\}$  be the set of all $k$-sparse vectors in $\R^n$. For a matrix $A\in \R^{n_1\times n_2}$, we will denote its singular values by $\sigma_i(A)$ for $i=1,..., n$ where $n:=\min\{n_1,n_2\}$ and $\sigma_1(A)\geq\sigma_2(A)\geq ...\geq \sigma_n(A)$. 
We will require the  definitions of the well known restricted isometry property (RIP), both for linear operators acting on sparse vectors and for linear operators acting on low-rank matrices. 

\begin{defn}[\emph{vector-RIP} (e.g., \cite{CRT05})] We say a linear operator $\Phi: \R^{n} \to \R^{m}$ satisfies the vector-RIP of order $k$ and constant $\delta_k$, if for all $x \in \Sigma_k^n$,
\[ (1-\delta_k)\|x\|_2^2 \leq \|\Phi x\|_2^2 \leq (1+\delta_k)\|x\|_2^2. \]
\end{defn}
\begin{defn}[\emph{matrix-RIP}] We say a linear operator $\M: \R^{n_1\times n_2} \to \R^{m}$ satisfies the matrix-RIP of order $k$ and constant $\delta_k$, if for all matrices $X$ of rank $k$ or less we have,
\[ (1-\delta_k)\|X\|_F^2 \leq \|\M( X) \|_2^2 \leq (1+\delta_k)\|X\|_F^2. \]
\end{defn}
\begin{defn}[Restriction \cite{oymak2011simplified}] Let $\M: \R^{n_1\times n_2}  \to \R^m $ be a linear operator and assume without loss of generality that $n_1 \leq n_2$. Given a pair of matrices $U$ and $V$ with orthonormal columns, define $\M_{U,V}$, the $(U,V)$ restriction of $\M$ by \footnote{Here, given a vector $x\in \R^n$, $\diag(x)=X$  is a diagonal matrix in $ \R^{n\times n}$ with $X_{ii} = x_i$ for $i\in \{1,..., n\}$.  }
\begin{align}
 \M_{U,V}: \R^{n_1} &\to \R^m  \nonumber \\
 x &\mapsto \M(U \diag(x) V^*). \nonumber
\end{align}
\end{defn}

\subsection{Preliminaries on $\SD$ quantization}\label{sec:SDprelims}
$\SD$ quantizers were first proposed in the context of digitizing oversampled band-limited functions by \cite{inose1963unity}, and their mathematical properties have been studied since. In this band-limited context, the $\SD$ quantizer takes in a sequence of point evaluations  of the function sampled at a rate exceeding the critical Nyquist rate and produces a sequence of \emph{quantized} elements, i.e., elements from a finite set.  So, the $\sd$ quantizer is associated with this finite set, say $\A\subset \R$ (called the quantization alphabet), and also with a \emph{scalar quantizer}
\begin{align}
Q_\A: \R &\to \A \nonumber \\
z &\mapsto \arg\min_{v\in\A} | v - z |.
\end{align}
$\SD$ schemes build on scalar quantization by incorporating a state  variable sequence $u$, which is recursively updated. In an $r$th order $\SD$ scheme, a function, say $\rho_r$, of  $r$ previous values of $u$ and the current measurement  are fed into the scalar quantizer to produce an element from $\A$. For example, in the band-limited context the measurements are simply the pointwise evaluations of the function. Defining $u_i = 0$ for $i\leq 0$, and denoting the measurements by $y_i$ we have the recursion: 
\begin{align}\label{eq:state}
q_i &= Q_\A ( \rho_r (u_{i-1},...u_{i-r+1},y_i) )\\
(\Delta^r u)_i  &=  y_i - q_i.
\end{align}
Here $(\Delta u)_i := u_i - u_{i-1}$, and $\Delta^r u := \Delta (\Delta^{r-1}u)$.
Thus, the $r$th order $\sd$ quantizer updates the state variables as a solution to an $r$th order difference equation.  To give a concrete example, 
   the simplest $1$st order $\SD$ scheme operates by running the following recursion: 
\begin{align}
q_i &= Q_\A ( y_i + u_{i-1} )\\
u_i &=  u_{i-1} + y_i - q_i.
\end{align}
Usually, the alphabet $\A$ associated with $\sd$ quantizers is of the form \[\A:=\{ \pm (j-1/2)\beta, j = 1, ... ,L\}.\]  We refer to such an $\A$ as a $2L$-level alphabet with step-size $\beta$. In particular, when $L=1$, we have a $1$-bit alphabet. 

For reasons related to building a circuit that implements the $\sd$ quantization scheme and bounding the reconstruction error, an important consideration is the so-called stability of the $\SD$ scheme. A stable $r$th order $\sd$ scheme produces \emph{bounded} state variables with \begin{equation}\|u\|_\infty < \gamma(r), \label{eq:bound_state}\end{equation} whenever $\|y\|_{\infty}$ is bounded above. Above, \(\gamma(r)\) is some constant which may depend on \(r\).
For example, for the \(2L\)-level alphabet described above, coupled with a particular choice of $\rho_r$ and \(\SD\) order \( r\) it is sufficient to choose \( L \ge 2 \lceil\frac{\|y\|_{\infty}}{\beta}\rceil + 2^r + 1 \) to guarantee \eqref{eq:bound_state} holds with \( \gamma(r) = \beta/2\) \cite{BYP04, daub-dev}. Note that with such a choice the size of the alphabet grows exponentially as a function of the $\SD$ order. On the other hand, given a fixed alphabet,  \cite{daub-dev} constructed the first family of functions $\rho_r$ with associated stability constants  $\gamma(r)$. Subsequently, the dependence on $r$ was improved upon by \cite{G-exp} and \cite{DGK10} via different constructions of $\rho_r$.   In these papers it was shown that $\sd$ quantized measurements of a band-limited function $f$, sampled at a rate $\lambda$ times the critical Nyquist rate, can be used to obtain an approximation $\hat{f}$ of $f$ satisfying
$$\|\hat{f}-f \|_\infty \leq C \gamma(r)\lambda^{-r}.$$
By optimizing the right hand side above, i.e., $\gamma(r)\lambda^{-r}$ as a function of $r$, \cite{G-exp} and \cite{DGK10} obtain the error rates $$\|\hat{f}-f\|_\infty \leq Ce^{-c\lambda}$$ where $c<1$ is a known constant depending on the family of schemes. 

Outside of the band-limited context, $\sd$ schemes were proposed and studied for quantizing finite-frame coefficients \cite{IS13, BYP04, KSW12, KSY14} as well as compressed sensing coefficients \cite{G10, FKS17, BJKS14, SWY15}. In both these contexts, given a linear map $\Phi: \R^{n} \mapsto \R^m$, absent noise, one obtains measurements \[ y = \Phi x \]  of a vector $x \in \mathcal{X} \subset \R^n$ and quantizes using an $r$th order stable $\sd$ scheme. To ensure boundedness of the resulting state variable, typically one has $\mathcal{X} \subset \{x\in \R^n: \| \Phi x\|_\infty < 1 \}$. One may also enforce additional restrictions on elements of $\mathcal{X}$, such as $k$-sparsity.   Here, as before, one runs a stable $r$th order $\sd$ quantization scheme 
\begin{align}
Q_{\SD}^{(r)}: \R^m \to  \mathcal{A}^m.
\end{align}
Writing the $\SD$ state equations \eqref{eq:state} in matrix-vector form yields
\begin{equation}\label{eq:state_discrete}
y - q = D^r u 
\end{equation}
where $D\in \R^{m\times m}$ is the lower bi-diagonal difference matrix with $1$ on the main diagonal and $-1$ on the sub-diagonal. 
In analogy with the band-limited case, here one defines the oversampling factor as the ratio of the number of measurements $m$ to the minimal number $m_0$ needed to ensure that $\Phi$ is injective (or stably invertible) on $\mathcal X$.  For example $\lambda:= \frac{m}{n}$ in the finite-frames setting when $\mathcal{X}$ is the Euclidean ball, and  $\lambda: = \frac{m}{k \log{n/k}} $ in the compressed sensing context when $\mathcal{X}$ is the intersection of the Euclidean ball with the set of $k$-sparse vectors in $\R^n$.  As in the band-limited context, one wishes to bound the reconstruction error as a function of $\lambda$. A typical result states that provided $\Phi$ satisfies certain assumptions, there exists a reconstruction map 
\begin{align}
\mathcal{D}: \A^m \to \R^n 
\end{align}
such that for all $x\in \mathcal{X}$ and $\hat{x}:= \mathcal{D} (Q_{\sd}^{(r)} (\Phi x)) $, 
\[ \|x - \hat{x} \|_2 \leq C \lambda^{-\alpha (r-1/2)} \]
where $\alpha\leq 1$ is a parameter that,  in the case of random measurements, controls the probability with which the result holds.
Most relevant to this work \cite{SWY15} proposes recovering \emph{arbitrary}, that is, not necessarily strictly sparse, vectors in \( \R^n \) from their \emph{noisy} $\SD$-quantized compressed sensing measurements by solving a convex optimization problem. In particular, one obtains the approximation $\hat{x}$ from $q:=Q_{\SD}^{(r)}(\Phi x + e)$, where $\|e\|_\infty \leq \varepsilon$ via
\begin{align}\label{eq:decoder_intro}
(\hat{x},\hat{\nu}) :=  \arg\min\limits_{(z,\nu)}\|z\|_1 \  \text{ subject to } &  \|D^{-r}(\Phi z+\nu-q )\|_2 \leq \gamma(r)\sqrt{m}  \notag \\
\text{\ \ and\ \ } & \|\nu\|_2\leq \epsilon \sqrt m.
\end{align} 
 Then, \cite{SWY15} shows that the reconstruction error due to quantization decays polynomially in the number of measurements, while maintaining stability and robustness against noise in the measurements and deviations from sparsity. Specifically, defining $$\sigma_k(x):=\arg\min\limits_{v\in\Sigma_k^n} \|x-v\|_1,$$ the following theorem holds.


\begin{thm}\cite{SWY15}\label{thm:SWY} Let $k,\ell,m,n$ be integers, and let $P_\ell:\R^m \to \R^\ell$ be the projection onto  the first $\ell$ coordinates. Let $D^{-r}=U\Sigma V^*$ be the singular value decomposition of $D^{-r}$ and let $\Phi$ be an $m\times n$ 
matrix such that $\frac{1}{\sqrt{\ell}}P_\ell V^* \Phi$ has the vector-RIP of order $k$ and constant $\delta_k < 1/9$. Then, for all $x\in\R^n$  satisfying $\|\Phi x\|_\infty \leq \mu <1$ and all $e$, $\|e\|_\infty\leq \epsilon<1-\mu$, the solution $\hat{x}$ of \eqref{eq:decoder_intro} with $q= Q_{\sd}^{(r)} (\Phi x + e)$  satisfies
\begin{equation}\label{eq:l2err_intro}
\|\hat{x}-x\|_2 \leq C\left(\big(\frac{m}{\ell}\big)^{-r+1/2}\beta+\frac{\sigma_k(x)}{\sqrt k}+\sqrt{\frac{m}{\ell}}\epsilon \right).
\end{equation}
Above $C$ does not depend on $m, \ell,n$. 
\end{thm}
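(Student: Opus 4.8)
The plan is to treat this as a robust $\ell_1$-recovery argument in which the $\SD$ state equation supplies the measurement-fidelity constraint and the spectrum of $D^{-r}$ supplies the oversampling gain. First I would check that the pair $(x,e)$ is itself feasible for the program \eqref{eq:decoder_intro}. Since $\|\Phi x\|_\infty \le \mu$ and $\|e\|_\infty \le \epsilon < 1-\mu$, we have $\|\Phi x + e\|_\infty < 1$, so the stable $r$th order scheme produces a bounded state with $\|u\|_\infty < \gamma(r)$; the matrix form \eqref{eq:state_discrete} applied to $y=\Phi x+e$ gives $D^{-r}(\Phi x + e - q) = u$, whence $\|D^{-r}(\Phi x + e - q)\|_2 = \|u\|_2 \le \sqrt m\,\gamma(r)$ and $\|e\|_2 \le \sqrt m\,\epsilon$. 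Thus $(x,e)$ is feasible, and optimality of $(\hat x,\hat\nu)$ forces $\|\hat x\|_1 \le \|x\|_1$.

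Next I would extract the two inequalities driving the recovery bound. Set $h := \hat x - x$ and let $T$ index the $k$ largest entries of $x$, so $\|x_{T^c}\|_1 = \sigma_k(x)$. The standard splitting of $\|\hat x\|_1 \le \|x\|_1$ yields the cone condition $\|h_{T^c}\|_1 \le \|h_T\|_1 + 2\sigma_k(x)$. On the fidelity side, both $(\hat x,\hat\nu)$ and $(x,e)$ satisfy the two constraints, so the triangle inequality gives $\|D^{-r}(\Phi h + (\hat\nu - e))\|_2 \le 2\gamma(r)\sqrt m$ together with $\|\hat\nu - e\|_2 \le 2\epsilon\sqrt m$.

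The crux is to convert the $D^{-r}$ tube bound into a bound on $\|\Psi h\|_2$, where $\Psi := \tfrac{1}{\sqrt\ell}P_\ell V^*\Phi$ is the matrix assumed to satisfy the vector-RIP. Writing $D^{-r} = U\Sigma V^*$ with the singular values in decreasing order, orthogonality of $U$ gives, for any $w$, the estimate $\|D^{-r}w\|_2 = \|\Sigma V^* w\|_2 \ge \sigma_\ell(D^{-r})\,\|P_\ell V^* w\|_2$, since each of the first $\ell$ diagonal entries of $\Sigma$ is at least $\sigma_\ell(D^{-r})$. Taking $w = \Phi h + (\hat\nu - e)$, dividing by $\sqrt\ell\,\sigma_\ell(D^{-r})$, and using $\|P_\ell V^*(\hat\nu-e)\|_2 \le \|\hat\nu - e\|_2 \le 2\epsilon\sqrt m$, I obtain
\[
\|\Psi h\|_2 \le \frac{2\gamma(r)\sqrt m}{\sqrt\ell\,\sigma_\ell(D^{-r})} + \frac{2\epsilon\sqrt m}{\sqrt\ell} =: \eta.
\]
The hard part is the lower bound on $\sigma_\ell(D^{-r})$, which is exactly what encodes the oversampling advantage of $\SD$ quantization. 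Here I would invoke the known spectral estimate for the difference matrix, $\sigma_\ell(D^{-r}) \ge c_r (m/\ell)^r$; substituting it gives $\tfrac{\sqrt m}{\sqrt\ell\,\sigma_\ell(D^{-r})} \le c_r^{-1}(m/\ell)^{-r+1/2}$, and since $\gamma(r)\asymp\beta$ the first term of $\eta$ becomes $C(m/\ell)^{-r+1/2}\beta$ while the second is $C\sqrt{m/\ell}\,\epsilon$. This singular-value bound is the main obstacle and the true content of the theorem; the rest is comparatively routine.

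Finally, with the measurement bound $\|\Psi h\|_2 \le \eta$ and the cone condition $\|h_{T^c}\|_1 \le \|h_T\|_1 + 2\sigma_k(x)$ in hand, I would close with a standard RIP-based recovery estimate. Because $\Psi$ has the vector-RIP of order $k$ with $\delta_k < 1/9$, the usual cone-plus-tube argument (bucket $h_{T^c}$ into blocks of size $k$ sorted by magnitude, bound the tail energy via the cone condition, and apply RIP to $h_T$ and the leading block, the small constant $1/9$ being what makes these block inequalities close) yields $\|h\|_2 \le C(\eta + \sigma_k(x)/\sqrt k)$. Inserting the expression for $\eta$ from the previous step gives precisely \eqref{eq:l2err_intro}, completing the plan.
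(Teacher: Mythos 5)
This theorem is imported from \cite{SWY15}; the paper offers no proof of its own, but your outline faithfully reconstructs the argument given there: feasibility of $(x,e)$ via the state equation, the cone condition from $\|\hat x\|_1\le\|x\|_1$, the tube condition from the two constraints, passage to $\Psi=\frac{1}{\sqrt\ell}P_\ell V^*\Phi$ using that the first $\ell$ singular values of $D^{-r}$ dominate $\sigma_\ell(D^{-r})$, and a standard RIP recovery lemma. The one ingredient you take on faith, $\sigma_\ell(D^{-r})\ge c_r(m/\ell)^r$, is precisely the spectral estimate established in the cited literature, so the proposal is correct and essentially the same approach as the source.
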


The proof of Theorem \ref{thm:SWY} reveals that a more general statement is true. Indeed, it turns out that the only assumptions on $\hat{x}$ needed 
are that it satisfies the constraints in \eqref{eq:decoder_intro} and that $\|\hat{x}\|_1 \leq \| x\|_1$. Moreover, the only assumption needed on $q$ is that it satisfies the state variable equations \eqref{eq:state_discrete}, and need not belong to $\mathcal{A}^m$. We will use this generalization in proving our main result, and we state it below for convenience.

\begin{thm}\label{thm:SWY_alt} Let $k,\ell,m,n, P_\ell, V^*, \Phi$ be as above. The following is true for all $x\in \R^n$ and $e \in \R^m$ with $\|\Phi x\|_\infty \leq \mu <1$ and $\|e\|_\infty\leq \epsilon<1-\mu$ : \newline Suppose $q$ is any vector which satisfies the relation \( \Phi x + e - D^{r} u = q  \)  and \( \| u \|_{\infty} \le \gamma(r) < \infty \). Suppose further that $\hat{x}\in\R^n$ is feasible to \eqref{eq:decoder_intro} and satisfies $\|\hat{x}\|_1 \leq \|x\|_1$. Then

\begin{equation}\label{eq:l2err_intro}
\|\hat{x}-x\|_2 \leq C\left(\big(\frac{m}{\ell}\big)^{-r+1/2}\beta+\frac{\sigma_k(x)}{\sqrt k}+\sqrt{\frac{m}{\ell}}\epsilon \right), 
\end{equation}
where $C$ does not depend on $m, \ell,n$. 
\end{thm}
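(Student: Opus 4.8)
The plan is to follow the standard Sigma-Delta / compressed-sensing recovery argument underlying Theorem~\ref{thm:SWY}, carefully checking that at each step only the relaxed hypotheses on $q$ and $\hat{x}$ are invoked. Write $h := \hat{x}-x$ for the error vector and let $S$ denote the support of a best $k$-term ($\ell_1$) approximation of $x$, so that $\|x_{S^c}\|_1 = \sigma_k(x)$. First I would establish the cone (compressibility) condition on $h$. This is the only place the hypothesis $\|\hat{x}\|_1 \le \|x\|_1$ is used: expanding $\|x+h\|_1 = \|\hat{x}\|_1 \le \|x\|_1$ and splitting coordinates over $S$ and $S^c$ yields the standard inequality $\|h_{S^c}\|_1 \le \|h_S\|_1 + 2\sigma_k(x)$. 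Crucially, this uses nothing about $\hat{x}$ being a minimizer, only that its $\ell_1$ norm does not exceed that of $x$.

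Next I would verify that both $x$ and $\hat{x}$ lie in the $D^{-r}$-tube. For $\hat{x}$ this is exactly the assumed feasibility: there is $\hat{\nu}$ with $\|D^{-r}(\Phi\hat{x}+\hat{\nu}-q)\|_2 \le \gamma(r)\sqrt{m}$ and $\|\hat{\nu}\|_2 \le \epsilon\sqrt{m}$. For $x$ I would take $\nu = e$; the hypothesis $\Phi x + e - D^r u = q$ with $\|u\|_\infty \le \gamma(r)$ gives $D^{-r}(\Phi x + e - q) = u$, hence $\|D^{-r}(\Phi x + e - q)\|_2 \le \sqrt{m}\,\|u\|_\infty \le \gamma(r)\sqrt{m}$, while $\|e\|_2 \le \epsilon\sqrt{m}$. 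This is precisely where the relaxation enters: the explicit requirement $q = Q_{\SD}^{(r)}(\Phi x + e)$ in Theorem~\ref{thm:SWY} is replaced by the weaker state-equation-with-bounded-state condition, which is all that is needed to certify that $x$ is tube-feasible.

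Then I would convert the tube bounds into a bound on the effective measurements $Ah$, where $A := \tfrac{1}{\sqrt{\ell}}P_\ell V^*\Phi$ is the matrix assumed to satisfy the RIP. Subtracting the two tube constraints gives $\|D^{-r}(\Phi h + \hat{\nu} - e)\|_2 \le 2\gamma(r)\sqrt{m}$. Using $D^{-r} = U\Sigma V^*$ one has $V^* D^r = \Sigma^{-1}U^*$, so setting $g := D^{-r}(\Phi h + \hat{\nu}-e)$ we obtain $P_\ell V^*(\Phi h + \hat{\nu} - e) = P_\ell V^* D^r g = P_\ell \Sigma^{-1} U^* g$; then $\|P_\ell\Sigma^{-1}\|_{\mathrm{op}} = 1/\sigma_\ell(D^{-r})$ together with the Sigma-Delta singular-value estimate $\sigma_\ell(D^{-r}) \ge c\,(m/\ell)^r$ yields $\|P_\ell V^*(\Phi h + \hat{\nu} - e)\|_2 \le C\,(m/\ell)^{-r}\gamma(r)\sqrt{m}$. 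Peeling off the noise term by the triangle inequality and the contractivity of $P_\ell V^*$, namely $\|P_\ell V^*(\hat{\nu}-e)\|_2 \le \|\hat{\nu}\|_2 + \|e\|_2 \le 2\epsilon\sqrt{m}$, and dividing by $\sqrt{\ell}$ gives
$$\|Ah\|_2 \le C\Big((m/\ell)^{-r}\gamma(r)\sqrt{m/\ell} + \epsilon\sqrt{m/\ell}\Big).$$

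Finally I would invoke the standard robust $\ell_1$-recovery guarantee under the RIP: since $A$ has the vector-RIP of order $k$ with $\delta_k < 1/9$, the cone condition from the first step and the tube bound just derived combine (via the usual near-optimal recovery estimate) to give $\|h\|_2 \le C\big(\sigma_k(x)/\sqrt{k} + \|Ah\|_2\big)$. Substituting the bound on $\|Ah\|_2$, recalling $\gamma(r) \asymp \beta$ and $(m/\ell)^{-r}(m/\ell)^{1/2} = (m/\ell)^{-r+1/2}$, produces the claimed estimate \eqref{eq:l2err_intro}. The main technical obstacle is the third step, the singular-value lower bound $\sigma_\ell(D^{-r}) \gtrsim (m/\ell)^r$ and the accompanying manipulation that transfers the $D^{-r}$-tube bound onto $P_\ell V^*\Phi h$, which is the heart of Sigma-Delta analysis; everything else is bookkeeping to confirm that the weakened hypotheses suffice at each invocation.
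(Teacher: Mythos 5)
Your proposal is correct and is essentially the argument the paper relies on: the paper offers no written proof of Theorem~\ref{thm:SWY_alt} beyond the remark that the proof of Theorem~\ref{thm:SWY} in \cite{SWY15} only uses tube-feasibility of $x$ and $\hat{x}$ together with $\|\hat{x}\|_1 \le \|x\|_1$, and your sketch is exactly that verification (cone condition from the $\ell_1$ inequality, tube-feasibility of $x$ from the relaxed state equation $D^{-r}(\Phi x + e - q) = u$, the $\sigma_\ell(D^{-r})\gtrsim (m/\ell)^r$ singular-value estimate to transfer the tube bound onto $\tfrac{1}{\sqrt{\ell}}P_\ell V^*\Phi h$, and the standard RIP recovery estimate). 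You correctly identify the two places where the relaxed hypotheses replace quantizer-specific and minimizer-specific facts, which is the entire content of the generalization.
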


\subsection{Preliminaries on low-rank recovery}
A key idea in our proof is relating low-rank matrix recovery to sparse vector recovery, as was first done in \cite{oymak2011simplified}, where the following useful lemmas were presented.  

\begin{lem}[\cite{oymak2011simplified}]\label{lem:MatRIP_VecRIP}
If $\M$ satisfies the matrix-RIP of order $k$ and constant $\delta_k$, then for all unitary $U$, $V$, $\M_{U,V}$ satisfies the vector-RIP of order $k$ and constant $\delta_k$.
\end{lem}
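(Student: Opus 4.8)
The plan is to exploit the correspondence between $k$-sparse vectors and rank-$k$ matrices that is built directly into the definition of the restriction $\M_{U,V}$. The key observation is that for any $x\in\R^{n_1}$ the matrix $X:=U\diag(x)V^*$ has rank equal to $|\supp(x)|$, so a $k$-sparse $x$ is sent to a matrix of rank at most $k$. I would therefore fix $x\in\Sigma_k^{n_1}$, set $X=U\diag(x)V^*$, and aim to rewrite both sides of the vector-RIP inequality for $\M_{U,V}$ as the two sides of the matrix-RIP inequality for $\M$ applied to $X$.

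First I would establish the norm identity $\|U\diag(x)V^*\|_F=\|x\|_2$. This uses only that $U$ and $V$ have orthonormal columns: writing $\|X\|_F^2=\operatorname{tr}(V\diag(x)^*U^*U\diag(x)V^*)$ and substituting $U^*U=I$, then applying the cyclic property of the trace together with $V^*V=I$, collapses the expression to $\operatorname{tr}(\diag(x)^2)=\|x\|_2^2$. Second, since $\M_{U,V}(x)=\M(X)$ by definition of the restriction, the quantity $\|\M_{U,V}(x)\|_2^2$ is literally $\|\M(X)\|_2^2$, so no further manipulation is needed on this side.

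With these two identities in hand the conclusion is immediate. Because $x$ is $k$-sparse, $X$ has rank at most $k$, so the matrix-RIP hypothesis on $\M$ gives $(1-\delta_k)\|X\|_F^2\leq\|\M(X)\|_2^2\leq(1+\delta_k)\|X\|_F^2$. Substituting $\|X\|_F^2=\|x\|_2^2$ and $\|\M(X)\|_2^2=\|\M_{U,V}(x)\|_2^2$ yields exactly the vector-RIP bound for $\M_{U,V}$ with the same constant $\delta_k$. As $x\in\Sigma_k^{n_1}$ was arbitrary, the bound holds for all $k$-sparse vectors, which is the claim.

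I do not expect a genuine obstacle here, since the lemma is essentially a change-of-variables observation. The only points that require minor care are verifying that $\rank(U\diag(x)V^*)\leq|\supp(x)|$, which is clear from $\rank(ABC)\leq\rank(B)$, and checking that the Frobenius norm is genuinely \emph{preserved} rather than merely bounded above or below; this is precisely where the orthonormality of the columns of both $U$ and $V$ is used, and it is what allows the constant $\delta_k$ to transfer unchanged.
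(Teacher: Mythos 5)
Your argument is correct and is the standard change-of-variables proof of this lemma: the paper itself gives no proof, simply citing \cite{oymak2011simplified}, and the cited argument is exactly the one you describe (a $k$-sparse $x$ maps to the rank-$\le k$ matrix $U\diag(x)V^*$, which is Frobenius-isometric to $x$ because $U$ and $V$ have orthonormal columns, so the matrix-RIP inequality transfers verbatim). Nothing is missing, and your two flagged points of care, the rank bound and the exact norm preservation, are precisely the right ones.
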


\begin{lem}[\cite{oymak2011simplified}]\label{lem2} 
Suppose $W\in \R^{n_1\times n_2}$ admits the singular value decomposition $U_W\Sigma_W V_W^*$, and suppose $X_0 \in \R^{n_1 \times n_2}$ admits the singular value decomposition $U_{X_0}\Sigma_{X_0} {V_{X_0}}^*$. Suppose that  $\|X_0 + W\|_* \leq \|X_0\|_*$ and assume without loss of generality that $n_1 \leq n_2$. Then, there exists $X_1 = U_W \diag(z)V_W^*$ for some $z\in \R^{n_1}$ such that $\|X_1 + W\|_* \leq \|X_1\|_*.$ In particular, the choice $X_1 := -U_W \Sigma_{X_0}V_W^*$ yields the inequality.
\end{lem}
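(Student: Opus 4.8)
The plan is to verify that the \emph{explicit} candidate $X_1 := -U_W \Sigma_{X_0} V_W^*$ does the job, which also settles the existence claim, since $X_1$ is manifestly of the form $U_W \diag(z) V_W^*$ with $z = -\big(\sigma_1(X_0),\dots,\sigma_{n_1}(X_0)\big) \in \R^{n_1}$. The first step is to record that $X_1$ has exactly the same singular values as $X_0$: because $U_W$ is orthogonal and $V_W$ has orthonormal columns, $X_1 = -U_W\Sigma_{X_0}V_W^*$ is a singular value decomposition up to absorbing the sign into $U_W$, so its singular values are $|z_i| = \sigma_i(X_0)$ and hence $\|X_1\|_* = \|X_0\|_*$. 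Combining this with the hypothesis $\|X_0 + W\|_* \le \|X_0\|_*$, the lemma reduces to the single inequality $\|X_1 + W\|_* \le \|X_0 + W\|_*$.

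Next I would compute the left-hand side exactly. Since $W = U_W\Sigma_W V_W^*$ and $X_1 = -U_W\Sigma_{X_0}V_W^*$ are expressed in the \emph{same} singular subspaces, their sum factors as $X_1 + W = U_W\big(\Sigma_W - \Sigma_{X_0}\big)V_W^*$. The middle factor is diagonal with $i$th entry $\sigma_i(W) - \sigma_i(X_0)$ (singular values being sorted decreasingly in both SVDs), and because $U_W, V_W$ have orthonormal columns this is again an SVD up to signs. Therefore $\|X_1 + W\|_* = \sum_{i=1}^{n_1} |\sigma_i(W) - \sigma_i(X_0)|$, reducing the problem to showing $\sum_{i=1}^{n_1}|\sigma_i(W)-\sigma_i(X_0)| \le \|X_0 + W\|_*$.

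The crux is this last lower bound on $\|X_0 + W\|_*$; it is the only step that uses genuine matrix structure rather than bookkeeping, and it is where I expect the real content to sit. I would obtain it from Mirsky's perturbation inequality for singular values, which states that for any two matrices $A,B$ of equal size, $\sum_i |\sigma_i(A) - \sigma_i(B)| \le \|A - B\|_*$ (a standard consequence of the theory of unitarily invariant norms, e.g.\ Ky Fan / von Neumann trace duality). Applying it with $A = -X_0$ and $B = W$, and noting $\sigma_i(-X_0) = \sigma_i(X_0)$ while $\|-X_0 - W\|_* = \|X_0 + W\|_*$, gives exactly $\sum_{i=1}^{n_1}|\sigma_i(X_0) - \sigma_i(W)| \le \|X_0 + W\|_*$. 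Chaining the three facts then yields
\[
\|X_1 + W\|_* = \sum_{i=1}^{n_1}|\sigma_i(W) - \sigma_i(X_0)| \le \|X_0 + W\|_* \le \|X_0\|_* = \|X_1\|_*,
\]
which is the claim. The main obstacle is recognizing that Mirsky's inequality, applied to the sign-flipped pair $(-X_0, W)$, is precisely the device that transfers the hypothesis about $X_0$ into a statement about the \emph{aligned} matrix $X_1$; once the singular values of $X_1$ and of $X_1 + W$ have been identified, everything else is routine.
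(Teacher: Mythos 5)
Your proof is correct. Note that the paper itself gives no proof of this lemma---it is imported verbatim from Oymak et al.~\cite{oymak2011simplified}---and your argument is essentially the one used there: identify $\|X_1\|_* = \|X_0\|_*$, diagonalize $X_1 + W = U_W(\Sigma_W - \Sigma_{X_0})V_W^*$ in the shared singular bases so that $\|X_1+W\|_* = \sum_i |\sigma_i(W)-\sigma_i(X_0)|$, and close the chain with Mirsky's perturbation inequality $\sum_i|\sigma_i(A)-\sigma_i(B)| \le \|A-B\|_*$ applied to the pair $(-X_0, W)$.
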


\subsection{Preliminaries on Probabalistic Tools}
Many of the classical compressed sensing results involve sampling a sparse signal with a Gaussian linear operator. It has been noticed, however, that only a handful of the special features of the Gaussian distribution are needed for these results to hold. Examples of such features include super-exponential tail decay, the existence of a moment generating function, and moments which grow ``slowly'', see \cite{VershyninNotes, FR13} for example. A class of distributions which enjoy these features is the sub-Gaussian class, which we define below.

\begin{defn}\label{def: subGauss}
Let \(X\) be a real-valued random variable. We say \(X\) is a sub-Gaussian random variable with parameter \(K\) if for all \( t \ge 0 \)
\begin{align*}
   \mathbb{P}[|X| > t] \le \exp(1-t^2/K).
\end{align*}
We say that a linear operator \( \mathcal{M} \) is sub-Gaussian if its associated matrix \(A_{\mathcal{M}}\) has entries drawn independently and identically from a sub-Gaussian distribution.
\end{defn}

The tail decay property in Definition \ref{def: subGauss} is equivalent to the \(j\)-th root of the \(j\)-th moment of a sub-Gaussian random variable \( X \) growing like \(\sqrt{j} \), or when \( \mathbb{E} X = 0 \) equivalent to the moment generating function existing over all of \( \R \). See \cite{VershyninNotes, FR13} for the details.

In the course of proving our main result, we will need to show a certain sub-Gaussian linear operator satisfies the matrix-RIP. Our proof of such will require a technique known as chaining. Talagrand makes the following definition in \cite{Talagrand}.
\begin{defn}
Given a metric space \((T, d)\), an admissible sequence of \(T \) is a collection of subsets of \(T\), \( \{T_s: s \ge 0 \} \), such that for all \( s \ge 0\), \( |T_s| \le 2^{2^s} \), and \( |T_0| = 1 \). The \( \gamma_2 \) functional is defined by
\begin{align*}
   \gamma_2(T,d) = \inf \sup_{t \in T} \sum_{s=0}^{\infty} 2^{s/2} d(t,T_s)
\end{align*}
where the infimum is taken with respect to all admissible sequences of \( T \).
\end{defn}

It is common, given the unwieldy definition above, to control the \( \gamma_2 \) functional with the well-known Dudley integral \cite{Dudley67}. In our case, we will consider a set of matrices \( \mathcal{S} \subset \R^{m\times n_1 n_2} \) equipped with the operator norm \( \| A \|_{2\to 2} = \sup_{\|x\|_2 = 1} \|Ax\|_2\). With this, we have for some universal constant \( c > 0 \)

\begin{align*}
   \gamma_2(\mathcal{S}, \| \cdot \|_{2 \to 2}) \le c \int_{0}^{d_{2 \to 2}(\mathcal{S})} \sqrt{\log(N(\mathcal{S}, \| \cdot \|_{2 \to 2}; u))} \, du
\end{align*}
where \( d_{2 \to 2}(\mathcal{S}) = \sup_{A \in \mathcal{S}} \|A\|_{2\to2}\). is the operator norm radius of the set \( \mathcal{S} \) and \( N(\mathcal{S}, \| \cdot \|_{2 \to 2}; u) \) is the covering number of \( \mathcal{S} \) with radius \( u \). 

The following useful lemma from \cite{KMRsuprema} will allow us to easily control the matrix-RIP of the linear operators $P_\ell V^*\mathcal{M}$ where $\mathcal{M}$ is sub-Gaussian. 
\begin{lem}[\cite{KMRsuprema}]
\label{lem:KMR}
Let \( \mathcal{S} \) be a set of matrices, and let \( \xi \) be a sub-Gaussian random vector with independent and identically distributed (i.i.d.) mean zero, unit variance entries with parameter $K$. Set 
\begin{align*}
   \mu &= \gamma_2(\mathcal{S}, \| \cdot \|_{2 \to 2}) \Big( \gamma_2(\mathcal{S}, \| \cdot \|_{2 \to 2}) + d_F(\mathcal{S}) \Big) \\
   \nu_1 &= d_{2 \to 2}(\mathcal{S}) \Big( \gamma_2(\mathcal{S}, \| \cdot \|_{2 \to 2}) + d_F(\mathcal{S}) \Big) \\
   \nu_2 &= d_{2 \to 2}^2(\mathcal{S}).
\end{align*}
where \( d_F(\mathcal{S}) = \sup_{A \in \mathcal{S}}\|A\|_F\) is the radius of the set \(\mathcal{S}\) with respect to the Frobenius norm. Then for all \( t > 0 \),
\begin{align*}
  \mathbb{P}\left[ \sup_{A \in \mathcal{S}} \Big| \|A \xi\|_2^2 - \mathbb{E}\|A \xi\|_2^2\Big| \ge c_1 \mu + t \right] \le 2 \exp \left( - c_2 \min \left\{ \frac{t^2}{\nu_1^2}, \frac{t}{\nu_2} \right\} \right)
\end{align*}
where the constants \( c_1, c_2 > 0 \) depend only on \( K \).
\end{lem}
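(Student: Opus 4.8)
The plan is to recognize the quantity as a second-order chaos process and to control its supremum by generic chaining with mixed sub-Gaussian/sub-exponential increments. Writing $M_A := A^* A$, we have $\|A\xi\|_2^2 - \mathbb{E}\|A\xi\|_2^2 = \xi^* M_A \xi - \mathbb{E}[\xi^* M_A \xi]$, so the object of interest is $\sup_{A \in \mathcal{S}} |X_A|$, where $X_A := \xi^* M_A \xi - \mathbb{E}[\xi^* M_A \xi]$ is a centered quadratic form indexed by $A$. I would first establish an increment bound, then feed it into a chaining theorem that returns both the expectation bound ($c_1\mu$) and the exponential tail in one stroke.

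\textbf{Increment bound.} Fix $A, B \in \mathcal{S}$. After decoupling the chaos $\xi^*(M_A - M_B)\xi$ against an independent copy $\xi'$ (losing only absolute constants) and applying the Hanson--Wright inequality for sub-Gaussian vectors with parameter $K$, the increment obeys a mixed tail
\begin{align*}
\mathbb{P}\big[|X_A - X_B| > s\big] \le 2\exp\!\left(-c\min\!\left\{\tfrac{s^2}{\|M_A - M_B\|_F^2},\, \tfrac{s}{\|M_A - M_B\|_{2\to2}}\right\}\right),
\end{align*}
with $c = c(K)$. The next step is to pass from the ``squared'' geometry of the $M_A$ to the operator-norm geometry of $\mathcal{S}$. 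Using $M_A - M_B = A^*(A-B) + (A-B)^* B$ with submultiplicativity of the operator norm and the mixed bounds $\|XY\|_F \le \|X\|_F\|Y\|_{2\to2}$, one gets
\begin{align*}
\|M_A - M_B\|_{2\to2} \le 2\, d_{2\to2}(\mathcal{S})\,\|A-B\|_{2\to2}, \qquad \|M_A - M_B\|_F \le 2\, d_F(\mathcal{S})\,\|A-B\|_{2\to2},
\end{align*}
so the sub-Gaussian increment metric is dominated by $d_F(\mathcal{S})\|A-B\|_{2\to2}$ and the sub-exponential one by $d_{2\to2}(\mathcal{S})\|A-B\|_{2\to2}$.

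\textbf{Chaining.} I would then invoke a generic-chaining theorem for processes with mixed increments (in the style of Talagrand, or Dirksen's tail version), which bounds $\mathbb{E}\sup_A|X_A|$ by a $\gamma_2$ functional of the sub-Gaussian metric plus a $\gamma_1$ functional of the sub-exponential metric, and produces a tail governed by the diameters of the two metrics. The sub-Gaussian term is immediate by homogeneity: $\gamma_2(\mathcal{S}, d_F(\mathcal{S})\|\cdot\|_{2\to2}) = d_F(\mathcal{S})\,\gamma_2(\mathcal{S}, \|\cdot\|_{2\to2})$, contributing the $\gamma_2 d_F$ part of $\mu$. The sub-exponential term is the delicate one. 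Rather than bounding $\gamma_1$ of a metric merely proportional to $\|\cdot\|_{2\to2}$, I would exploit the genuine \emph{product} structure $\|M_A - M_B\|_{2\to2} \le \|A-B\|_{2\to2}(\|A\|_{2\to2}+\|B\|_{2\to2})$: a dedicated ``chaining-a-product'' lemma converts this into a bound of the form $\gamma_2(\mathcal{S},\|\cdot\|_{2\to2})^2 + d_{2\to2}(\mathcal{S})\gamma_2(\mathcal{S},\|\cdot\|_{2\to2})$, which after absorbing $d_{2\to2}\le d_F$ yields the $\gamma_2^2$ part of $\mu = \gamma_2(\gamma_2 + d_F)$. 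Tracking the two diameters — $\sup\|M_A - M_B\|_{2\to2}\lesssim d_{2\to2}^2$ and $\sup\|M_A-M_B\|_F\lesssim d_F\, d_{2\to2}$, together with the interaction term $d_{2\to2}\gamma_2$ — through the tail version of the chaining bound gives the stated deviation parameters $\nu_1 = d_{2\to2}(\gamma_2 + d_F)$ and $\nu_2 = d_{2\to2}^2$.

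\textbf{Main obstacle.} The crux is the sub-exponential (operator-norm) part: naively one bounds it by the $\gamma_1$ functional of a metric proportional to $\|\cdot\|_{2\to2}$, but $\gamma_1$ is in general much larger than $\gamma_2$, so this route loses the result entirely. The essential technical device is the product-chaining lemma that leverages the quadratic nature of the chaos to replace this $\gamma_1$ by a \emph{squared} $\gamma_2$ functional; establishing this lemma, and then propagating both increment metrics through a tail-sensitive chaining argument so that the expectation collapses to $c_1\mu$ and the deviation separates cleanly into the $\nu_1$ and $\nu_2$ regimes, is where the real work lies. By contrast, the passage from Gaussian to general sub-Gaussian $\xi$ is comparatively routine, requiring only the sub-Gaussian forms of decoupling and Hanson--Wright, with all constants depending on $K$.
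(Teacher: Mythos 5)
This lemma is imported verbatim from the cited reference \cite{KMRsuprema}; the paper offers no proof of it, so there is nothing internal to compare against. Your outline is a faithful sketch of the strategy actually used in that reference (and in Dirksen's tail-bound refinement): reduce to a decoupled order-two chaos, obtain mixed sub-Gaussian/sub-exponential increment bounds via Hanson--Wright with $\|M_A-M_B\|_F\le 2d_F(\mathcal{S})\|A-B\|_{2\to2}$ and $\|M_A-M_B\|_{2\to2}\le 2d_{2\to2}(\mathcal{S})\|A-B\|_{2\to2}$, and then use the product structure of the chaos to replace the lossy $\gamma_1$ functional by $\gamma_2^2$, which is indeed the crux. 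The sketch is correct as a roadmap, with the understanding that the product-chaining lemma and the tail-sensitive chaining bound are invoked rather than established.
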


\begin{lem}
\label{lem:sG concentrate}
Let \( \M: \R^{n_1 \times n_2} \to \R^m \) be a mean zero, unit variance sub-Gaussian linear map with parameter $K$, \( P_{\ell}: \R^{m} \to \R^{\ell} \) the projection map onto the first \( \ell \) coordinates, and \( V^* \in \R^{m\times m}\) a unitary matrix. Then there exist constants $C_1, C_2$ which may depend on $K$, such that for \(\ell \ge C_1 \frac{k(n_1+n_2 + 1)}{\delta_k^2} \), the operator \(\frac{1}{\sqrt{\ell}}P_{\ell} V^* \M\) has the matrix-RIP with constant \( \delta_k \) with probability exceeding \(1 - 2e^{-C_2 \ell } \).
\end{lem}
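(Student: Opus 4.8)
The plan is to reduce the matrix-RIP to a single uniform concentration inequality and then invoke Lemma \ref{lem:KMR}. By homogeneity it suffices to control
$$\sup_{X \in \mathcal{T}_k}\Big|\,\|\tfrac{1}{\sqrt{\ell}}P_\ell V^*\M(X)\|_2^2 - \|X\|_F^2\,\Big|,$$
where $\mathcal{T}_k := \{X \in \R^{n_1\times n_2}: \rank(X)\le k,\ \|X\|_F = 1\}$; bounding this supremum by $\delta_k$ gives the matrix-RIP with constant $\delta_k$. First I would rewrite the quadratic form to match the hypotheses of Lemma \ref{lem:KMR}. Writing $\xi \in \R^{m n_1 n_2}$ for the mean-zero, unit-variance, i.i.d.\ entries of $A_\M$ stacked row by row, linearity gives $\M(X) = (I_m \otimes \vec{X}^*)\,\xi$, so that $\frac{1}{\sqrt{\ell}}P_\ell V^* \M(X) = A_X\,\xi$ with $A_X := \frac{1}{\sqrt{\ell}} P_\ell V^*\, (I_m \otimes \vec{X}^*)$. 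The problem thus becomes controlling $\sup_{A \in \mathcal{S}}\big|\|A\xi\|_2^2 - \Eb\|A\xi\|_2^2\big|$ over the matrix set $\mathcal{S} := \{A_X : X \in \mathcal{T}_k\}$.

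Next I would compute the three geometric parameters that Lemma \ref{lem:KMR} requires. The key simplification is that for \emph{every} matrix $Y$ one has $A_Y A_Y^* = \frac{\|Y\|_F^2}{\ell} P_\ell V^* V P_\ell^* = \frac{\|Y\|_F^2}{\ell} I_\ell$, using $(I_m\otimes \vec{Y}^*)(I_m \otimes \vec{Y}) = \|Y\|_F^2\, I_m$ and the unitarity of $V$. Consequently $\Eb\|A_X\xi\|_2^2 = \|A_X\|_F^2 = \|X\|_F^2 = 1$, so the expectation is exactly the quantity we are comparing against; moreover $d_F(\mathcal{S}) = 1$, and $\|A_Y\|_{2\to 2} = \|Y\|_F/\sqrt{\ell}$ for all $Y$, whence $d_{2\to 2}(\mathcal{S}) = 1/\sqrt{\ell}$. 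This last identity also shows that the operator-norm metric on $\mathcal{S}$ is exactly $1/\sqrt{\ell}$ times the Frobenius metric on $\mathcal{T}_k$, which is what lets the chaining reduce to a covering estimate for low-rank matrices.

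The heart of the argument is bounding $\gamma_2(\mathcal{S}, \|\cdot\|_{2\to 2})$, and this is the step I expect to be the main obstacle, since it is where the low-rank structure is exploited. I would use the Dudley integral stated above together with the standard covering estimate $N(\mathcal{T}_k, \|\cdot\|_F; \epsilon) \le (9/\epsilon)^{(n_1+n_2+1)k}$ for the set of unit-Frobenius-norm matrices of rank at most $k$. Since covering $\mathcal{S}$ in operator norm at scale $u$ is the same as covering $\mathcal{T}_k$ in Frobenius norm at scale $u\sqrt{\ell}$, the substitution $v = u\sqrt{\ell}$ gives
$$\gamma_2(\mathcal{S}, \|\cdot\|_{2\to 2}) \le \frac{c\,\sqrt{(n_1+n_2+1)k}}{\sqrt{\ell}}\int_0^1\sqrt{\log(9/v)}\,\dd v \le C\sqrt{\frac{(n_1+n_2+1)k}{\ell}},$$
the last integral being a finite absolute constant.

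Finally I would substitute these estimates into Lemma \ref{lem:KMR} with $t = \delta_k/2$. Then $\mu = \gamma_2(\gamma_2 + d_F) \le C\sqrt{(n_1+n_2+1)k/\ell}$, and the hypothesis $\ell \ge C_1 (n_1+n_2+1)k/\delta_k^2$ with $C_1$ chosen large enough forces $c_1\mu \le \delta_k/2$, so that $c_1\mu + t \le \delta_k$ as required. With $\nu_1 \le C/\sqrt{\ell}$ and $\nu_2 = 1/\ell$, the exponent $c_2\min\{t^2/\nu_1^2,\, t/\nu_2\}$ is of order $\delta_k^2\,\ell$, which (treating the target RIP constant $\delta_k$ as fixed, and noting that $\delta_k^2\ell \ge C_1(n_1+n_2+1)k$ under the sample-complexity hypothesis) yields the claimed probability $1 - 2e^{-C_2\ell}$. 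The only genuinely delicate point is the $\gamma_2$ bound; the reformulation and the parameter computations are mechanical once the identity $A_YA_Y^* = \frac{\|Y\|_F^2}{\ell}I_\ell$ is in hand.
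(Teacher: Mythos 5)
Your proposal is correct and follows essentially the same route as the paper's proof: rewrite $\frac{1}{\sqrt{\ell}}P_\ell V^*\M(X)$ as $A_X\xi$ with $A_X = \frac{1}{\sqrt{\ell}}P_\ell V^*(I_m\otimes\vec{X}^*)$, compute $d_F$, $d_{2\to2}$, and $\gamma_2$ (the latter via Dudley's integral and the Cand\`es--Plan covering bound for unit-norm rank-$k$ matrices), and apply Lemma \ref{lem:KMR} with $t=\delta_k/2$. Your exact identity $A_YA_Y^* = \frac{\|Y\|_F^2}{\ell}I_\ell$ is a slightly cleaner way to obtain the same parameter values the paper derives by direct calculation and Cauchy--Schwarz, but the argument is the same.
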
  
\begin{proof}
The proof will be an application of Lemma \ref{lem:KMR}. To that end,  observe that
   \begin{align*}
      \frac{1}{\sqrt{\ell}}P_{\ell}V^* \M (X) =
      \frac{1}{\sqrt{\ell}}P_{\ell}V^*\text{diag}(\vec{X}^T) \xi
   \end{align*}
   where \(\text{diag}(\vec{X}^T) = I_{m\times m} \otimes (\vec{X})^T \), and \( \xi \in \R^{m n_1  n_2}\) is a sub-Gaussian random vector \footnote{Here, \( \otimes\) refers to the Kronecker product of matrices.}. Without loss of generality, we may assume that \( \|X\|_F = 1\) by rescaling, if necessary. It behooves us then to consider
   \begin{align*}
      \mathcal{S} = \left\lbrace \frac{1}{\sqrt{\ell}} P_{\ell} V^* \text{diag}(\vec{X}^T) : X \in \R^{n_1 \times n_2},\, \|X\|_F = 1, \text{rank}(X) \le k \right\rbrace.
   \end{align*}
    Let \( V_{i,\cdot} \) denote the \(i\)-th row of \(V\). By direct calculation, we see that
   \begin{align*}
      d_F^2(\mathcal{S}) = \sup_{X} \frac{1}{\ell} \|P_{\ell} V^* \text{diag}(\vec{X}^T) \|_F^2 
      &=  \sup_{X} \frac{1}{\ell} \|X\|_F^2 \sum_{i=1}^{m} \sum_{j=1}^{\ell} |V_{i,j}|^2 \le 1.
   \end{align*}
   Likewise, by direct calculation,
   \begin{align*}
     d_{2\to 2}^2(\mathcal{S}) &= \sup_{X} \frac{1}{\ell} \|P_{\ell} V^* \text{diag}(\vec{X}^T) \|_{2\to 2}^2 = \sup_{X} \sup_{\|w\|_2 = 1} \frac{1}{\ell} \|P_{\ell}V^* \text{diag}(\vec{X}^T)w \|_2^2 \\
&\le \frac{1}{\ell} \sup_{X} \|X\|_F^2 = \frac{1}{\ell}.
   \end{align*}
Above, the last inequality follows from the Cauchy-Schwarz inequality, or alternatively from the fact that the largest singular value of \( I \otimes \vec{X}^T \) is just the singular value of the vector $\vec{X}^T$, namely its Frobenius norm. Lemma 3.1 in \cite{candes2011tight} tells us the covering number \(N(\mathcal{S}, \| \cdot \|_F; \ep) \le \left( \frac{9}{\ep} \right)^{(n_1+n_2 + 1)k} \). Invoking Dudley's inequality and H{\"o}lder's inequality, and letting \(\mathds{1}_{B}\) denote the indicator function on the set \(B\), that is, \(\mathds{1}_{B}(x) = 1 \) for \(x \in B\) and \(0\) otherwise, we get
   \begin{align*}
      \gamma_2(\mathcal{S}, \|\cdot\|_{2\to2})& \le c_1 \sqrt{k(n_1 + n_2 + 1)} \int_{0}^{\frac{1}{\sqrt{\ell}}} \sqrt{\log\left(\frac{9}{\ep}\right)} \, d\ep \\
      &\le c_1 \sqrt{k(n_1 + n_2 + 1)} \left\| \mathds{1}_{[0,\frac{1}{\sqrt{\ell}}]} \right\|_{L^2([0,1])} \left\| \sqrt{\log\left(\frac{9}{\ep}\right)} \right\|_{L^2([0,1])}\\
      &= c_1 \sqrt{\frac{k(n_1 + n_2 + 1)}{\ell}}(\log(9) + 1)^{1/2} := c_2 \sqrt{\frac{k(n_1 + n_2 + 1)}{\ell}}.
   \end{align*}
   Putting it all together, we have 
      \begin{align*}
      \mu &\leq c_2^2 \frac{k(n_1 + n_2 + 1)}{\ell} + c_2 \sqrt{\frac{k(n_1 + n_2 + 1)}{\ell}} \le c_3\sqrt{\frac{k(n_1 + n_2 + 1)}{\ell}}  \\
      \nu_1 &\leq \frac{1}{\sqrt{\ell}} + c_2\frac{\sqrt{k(n_1+n_2+1)}}{\ell} \le c_4 \frac{1}{\sqrt{\ell}} \\
      \nu_2 &= \frac{1}{\ell}.
   \end{align*}
So invoking Lemma \ref{lem:KMR} yields for all \( t > 0 \), 
   \begin{align}\label{eq: PVM}
   \P\left[\sup_{X} \Big| \frac{1}{\ell} \|P_{\ell}V^* \M(X)\|_F^2 - \mathbb{E}\frac{1}{\ell} \|P_{\ell}V^* \M(X)\|_F^2 \Big| \ge c_5 \mu + t\right] \le 2\exp\left(-c_6 \min\left\lbrace \frac{t^2}{\nu_1^2}, \frac{t}{\nu_2} \right\rbrace \right), 
   \end{align}
   where the supremum is taken over all  $X \in \R^{n_1 \times n_2},\, \|X\|_F = 1, \text{rank}(X) \le k$.
   Note that by independence of the \( A_j \), we have
   \begin{align*}
   \mathbb{E}\frac{1}{\ell} \|P_{\ell}V^* \M(X)\|_F^2 &= \frac{1}{\ell} \mathbb{E}\sum_{i=1}^{\ell} \left( \sum_{j=1}^{m} V_{i,j} \langle A_j, X \rangle \right)^2 
   \\ &= \frac{1}{\ell} \sum_{i=1}^{\ell} \sum_{j=1}^{m} V_{i,j}^2 \mathbb{E}\langle A_j, X \rangle^2
   \\ & = \frac{1}{\ell} \|X\|_F^2 \sum_{i=1}^{\ell} \sum_{j=1}^{m} V_{i,j}^2 = \|X\|_F^2.
   \end{align*}
   Equation \eqref{eq: PVM} now becomes
   \begin{align}\label{eq: PVM reduce}
   \P\left[\sup_{X} \Big| \frac{1}{\ell} \|P_{\ell}V^* \M(X)\|_F^2 - \|X\|_F^2 \Big| \ge c_5 \sqrt{\frac{k(n_1 + n_2 + 1)}{\ell}} + t\right] \le 2\exp\left(-c_6 \min\left\lbrace c_4^{-2} t^2 \ell, t \ell \right\rbrace \right). 
   \end{align}
   Choosing \( t = \delta_k/2 \) and recalling that \( \ell \ge C_1 \frac{k(n_1+n_2 + 1)}{\delta_k^2} \) with $C_1:= 4c_5^2$,   
   equation \eqref{eq: PVM reduce} reduces to
   \begin{align*}
   \P\left[\sup_{X} \Big| \frac{1}{\ell} \|P_{\ell}V^* \M(X)\|_F^2 - \|X\|_F^2 \Big| \ge \delta_k\right] \le 2\exp\left(-C_2 \ell \right),
   \end{align*}
where $C_2:= c_6 \min\{ c_4^{-2}\delta_k^2/4,\delta_k/2\}$.
   Therefore, with probability \(1 - 2\exp\left(-C_2 \ell \right) \), we have that
   \begin{align*}
   \left| \frac{1}{\ell} \|P_{\ell}V^* \M(X)\|_F^2 - \|X\|_F^2 \right| \le \delta_k = \delta_k \|X\|_F^2
   \end{align*}
   for all $X \in \R^{n_1 \times n_2},\, \|X\|_F = 1, \text{rank}(X) \le k$. In other words, \( \frac{1}{\ell} P_{\ell} V^* \M(X) \) has the matrix RIP with constant \( \delta_k\) with high probability.
\end{proof}

\section{Recovery error guarantees}
Herein, we present our main result on the recovery error guarantees for $\SD$-quantized sub-Gaussian measurements of approximately low-rank matrices. Specifically, our results pertain to reconstruction via the constrained nuclear-norm minimization 
\begin{align}\label{eq:decoder} 
(\hat{X},\hat{\nu}) :=  \arg\min\limits_{(Z,\nu)}\|Z\|_* \  \text{ subject to } &  \|D^{-r}(\M(Z) +\nu-q )\|_2 \leq \gamma(r)\sqrt{m}  \notag \\
\text{\ \ and\ \ } & \|\nu\|_2\leq \epsilon \sqrt m
\end{align} 
where $\gamma(r)$ is the stability constant associated with the quantizer.
As such, Theorem \ref{thm: main} is a generalization of Theorem \ref{thm:SWY} to the low-rank matrix case. 
\begin{thm}[Error guarantees for stable $\SD$ quantizers]
\label{thm: main}
 Let $k,\ell,$ and $r$ be integers and let $\M:\R^{n_1\times n_2} \to \R^m$ be a mean zero, unit variance sub-Gaussian linear operator with parameter $K$. Suppose that $m\geq \ell \geq c_1 k \max\{n_1,n_2\}$. Then, with  probability exceeding $1-c_2e^{-c_3 \ell}$ on the draw of $\M$, the following holds for a stable $\SD$ quantizer with stability constant $\gamma(r)$:

For all $X\in\R^{n_1\times n_2}$, the solution ${X^\sharp}$ of \eqref{eq:decoder} where $q$ is the $\sd$ quantization of $\M(X) + e$ with $\|e\|_\infty\leq \epsilon$, satisfies
\begin{equation}\label{eq:l2err_intro}
\|{X^\sharp}-X\|_F \leq C(r)\left(\left(\frac{m}{\ell}\right)^{-r+1/2}\beta+\frac{\sigma_k(X)_*}{\sqrt k}+\sqrt{\frac{m}{\ell}}\epsilon \right).
\end{equation}
The constants $c_1, c_2, c_3, C$  do not depend on the dimensions, but may depend on $K$ and $r$.
\end{thm}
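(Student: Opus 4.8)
The plan is to reduce the low-rank matrix recovery problem to the already-established sparse vector recovery result, Theorem~\ref{thm:SWY_alt}, using the restriction machinery of Oymak et al. First I would establish the probabilistic ingredient: by Lemma~\ref{lem:sG concentrate}, since $\ell \geq c_1 k \max\{n_1,n_2\}$ (which dominates $k(n_1+n_2+1)$ up to a constant), the operator $\frac{1}{\sqrt{\ell}} P_\ell V^* \M$ satisfies the matrix-RIP of order $k$ with a constant $\delta_k < 1/9$ with probability exceeding $1 - c_2 e^{-c_3 \ell}$, where $V^*$ comes from the SVD $D^{-r} = U\Sigma V^*$. By Lemma~\ref{lem:MatRIP_VecRIP}, this matrix-RIP immediately transfers to the vector-RIP of order $k$ and the same constant for every restriction $\left(\frac{1}{\sqrt{\ell}} P_\ell V^* \M\right)_{U_0,V_0}$, which is exactly the hypothesis needed to apply the vector theorem. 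I would condition on this high-probability event for the rest of the argument.

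Next I would set up the transfer from matrices to vectors using the feasibility of $X^\sharp$. The key structural fact is Lemma~\ref{lem2}: since $X^\sharp$ minimizes the nuclear norm over the feasible set and $X$ itself is (approximately) feasible, we have $\|X^\sharp\|_* \leq \|X\|_*$, equivalently $\|X + W\|_* \leq \|X\|_*$ where $W := X^\sharp - X$. Writing $W = U_W \Sigma_W V_W^*$, Lemma~\ref{lem2} produces a matrix $X_1 = U_W \diag(z) V_W^*$ with $\|X_1 + W\|_* \leq \|X_1\|_*$, and crucially both $X_1$ and $X_1 + W$ are simultaneously diagonalized by the unitaries $U_W$ and $V_W$. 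This lets me pass to the restricted operator $\M_{U_W, V_W}$: setting $\tilde x := z$ and $\tilde x^\sharp$ the vector of diagonal entries of $X_1 + W$ in the $U_W, V_W$ basis, the nuclear-norm inequality becomes the $\ell_1$ inequality $\|\tilde x^\sharp\|_1 \leq \|\tilde x\|_1$, and the constraint $\|D^{-r}(\M(Z)+\nu-q)\|_2 \leq \gamma(r)\sqrt m$ becomes the corresponding constraint for the vector operator $\Phi := \M_{U_W,V_W}$ applied to these diagonal vectors, because $\M(X_1) = \Phi(\tilde x)$ and $\M(X_1 + W) = \Phi(\tilde x^\sharp)$. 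The $\SD$ state equation $\M(X) + e - D^r u = q$ with $\|u\|_\infty \leq \gamma(r)$ carries over verbatim in the vectorized picture.

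With this dictionary in place I would invoke Theorem~\ref{thm:SWY_alt} for the operator $\Phi = \M_{U_W,V_W}$, which by the above has vector-RIP of order $k$ and constant $\delta_k < 1/9$, to conclude
\begin{equation*}
\|\tilde x^\sharp - \tilde x\|_2 \leq C(r)\left(\left(\frac{m}{\ell}\right)^{-r+1/2}\beta + \frac{\sigma_k(\tilde x)}{\sqrt k} + \sqrt{\frac{m}{\ell}}\,\epsilon\right).
\end{equation*}
The final step is to translate this back to the Frobenius norm of $W$. Here I would use that $\|\tilde x^\sharp - \tilde x\|_2 = \|(X_1 + W) - X_1\|_F = \|W\|_F = \|X^\sharp - X\|_F$ by unitary invariance, and that the best $k$-term $\ell_1$ error $\sigma_k(\tilde x)$ relates to $\sigma_k(X)_*$ through the singular values, since the diagonal entries of $X_1$ are (up to sign) the singular values of $X$ by the specific choice $X_1 = -U_W \Sigma_X V_W^*$ in Lemma~\ref{lem2}.

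I expect the main obstacle to be the careful bookkeeping in the transfer step: verifying that the measurement constraint and the noise term $e$ are faithfully preserved when passing from $\M$ acting on the matrices $X_1, X_1+W$ to the restricted vector operator $\Phi$, and confirming that the quantity $\sigma_k(\tilde x)$ arising from the vector bound is genuinely controlled by $\sigma_k(X)_*/\sqrt k$ rather than some weaker surrogate. The subtlety is that $\tilde x$ is defined through $W$'s singular vectors, not $X$'s, so one must use the \emph{specific} witness $X_1 = -U_W\Sigma_X V_W^*$ from Lemma~\ref{lem2} to identify its entries with the singular values of $X$ and thereby recover the correct best-rank-$k$ approximation error.
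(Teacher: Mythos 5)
Your proposal is correct and follows essentially the same route as the paper's proof: matrix-RIP via Lemma~\ref{lem:sG concentrate} transferred to vector-RIP by Lemma~\ref{lem:MatRIP_VecRIP}, the specific witness $X_1=-U_W\Sigma_X V_W^*$ from Lemma~\ref{lem2} to pass the nuclear-norm inequality to an $\ell_1$ inequality on singular-value vectors, and then Theorem~\ref{thm:SWY_alt} applied to $\M_{U_W,V_W}$. The one piece you defer to ``bookkeeping'' is exactly where the paper does its (routine but necessary) work: the state equation does not carry over verbatim, since $q$ quantizes $\M(X)+e$ rather than $\M(X_1)+e$, so one must introduce the surrogate target $y_1:=D^{-r}(\M(X_1)+e)+u$ and use $\M(X_1+W)-\M(X_1)=\M(X^\sharp)-\M(X)$ to show both $x_1$ and $x_1+w$ satisfy the constraints of \eqref{eq:decoder_intro} relative to $y_1$.
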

\begin{proof}
Recall that by the $\sd$ state equations, we have \[\|u\|_\infty = \| D^{-r} (\M(X) + e -q) \|_\infty \leq \gamma(r) \beta.\]
Consequently, by feasibility and optimality of $(X^\sharp, \nu^\sharp)$ respectively, we have
$ \| D^{-r} (\M(X^\sharp)+v^\sharp-q) \|_2 \leq \gamma(r) \beta \sqrt{m}$ and $\|X^\sharp\|_* \leq \|X\|_*$. 

Define $W:=X^\sharp -X$ and let $U_W \Sigma_W V^*_W$ be the singular value decomposition of $W$. Then, denoting by $U_X \Sigma_X V_X^*$ the singular value decomposition of $X$, we have by Lemma \ref{lem2}, with $X_1 = -U_W \Sigma_X V_W^*$ that  
\[ \|X_1 + W\|_* \leq  \|X_1\|_*.\]
Moreover, defining 
\[ y_1 := D^{-r}( \M(X_1) + e) + u,\]
we have by the linearity of $\M$
\begin{align}
\|  D^{-r} (\M (X_1 +W ) +v^\sharp) - y_1\|_2  &= \|  D^{-r} (\M (X_1 +W )+v^\sharp) - (D^{-r}( \M(X_1) + e) + u) \|_2 \nonumber  \\
  &= \|  D^{-r} (\M (W ) + v^\sharp-e) -  u \|_2  \nonumber \\
&= \|  D^{-r} (\M (X +W ) +v^\sharp)- (D^{-r} (\M(X) + e) + u) \|_2 \nonumber\\
&= \|  D^{-r} (\M (X^\sharp ) + v^\sharp - q) \|_2 \nonumber \\
&\leq \gamma(r)\beta \sqrt{m}. \label{eq:something}
\end{align}
Now, note that with $x_1$ denoting the vector composed of the diagonal entries of $-\Sigma_X$, we have 
\begin{align}
y_1 & =D^{-r}(\M (U_W \diag(x_1) V_W^*) + e) + u\\
&=D^{-r}(\M_{U_W,V_W} x_1 + e) + u\\
&=(D^{-r}\M)_{U_W,V_W} x_1 + D^{-r}e + u. \label{eq:y_1_2}
\end{align}
Above, we defined $(D^{-r}\M)(X):= \sum\limits_{i=1}^m \langle X, A_i \rangle D^{-r}e_i$.
Denoting by $w$ the vector composed of the diagonal entries of $\Sigma_W$, \eqref{eq:y_1_2} and \eqref{eq:something} respectively yield the inequalities \begin{equation}\|(D^{-r}\M)_{U_W,V_W} x_1 +D^{-r}e- y_1 \|_2 \leq \gamma(r)\beta \sqrt{m}.\end{equation} 
and
\begin{equation}\|(D^{-r}\M)_{U_W,V_W} (x_1+w) +D^{-r}v^\sharp - y_1 \|_2 \leq \gamma(r)\beta \sqrt{m}.\end{equation} 
Additionally, we have
that \[ \|x_1+w\|_1= \|X_1+W\|_* = \|X^\sharp\|_*  \leq  \|X\|_* = \|x_1\|_1. \] 

Thus, we have shown that the vector $x^\sharp:=x_1+w$ has a smaller  $\ell_1$ norm than $x$, and that it is feasible to \eqref{eq:decoder_intro} with $\M_{U_W,V_W}$ in place of $\Phi$ and $y_1$ in place of $D^{-r}q$. So, we are \emph{almost} ready to apply Theorem \ref{thm:SWY_alt} to $\M_{U_W,V_W}$ and conclude that 
\begin{equation} \|X^{\sharp}-X\|_F= \|W\|_F = \|w\|_2 \le C(r) \left(  \left(\frac{m}{\ell}\right)^{-r+1/2} \beta + \frac{\sigma_k(X)_*}{\sqrt k} +\sqrt{\frac{m}{\ell}}\epsilon \right).\label{eq:result}\end{equation} 
However, to do that, we must first show that  
$\frac{1}{\sqrt{\ell}}(P_\ell V^* \M)_{(U_W,V_W)}$ has the vector-RIP of order $k$ and constant $\beta < 1/9$. This, however, follows from Lemma \ref{lem:sG concentrate} where it is established that $(\frac{1}{\sqrt{\ell}}P_\ell V^* \M)$ has the required matrix-RIP, with high probability.
By Lemma \ref{lem:MatRIP_VecRIP}, this implies that $(\frac{1}{\sqrt{\ell}}P_\ell V^* \M)_{U_W,V_W} $ has the vector-RIP of order $k$ for all unitary pairs $(U_W,V_W)$, so now we may apply Theorem \ref{thm:SWY_alt} to obtain~\eqref{eq:result} and conclude the proof.

 \end{proof}
By finding the optimal quantization order $r$ as a function of the oversampling factor, as is standard in the $\SD$ literature (e.g., \cite{G-exp}, \cite{KSW12}), root-exponential error decay can attained. Corollary \ref{cor: root exp} is a precise statement to that effect. Its proof follows the same argument as Corollary 11 in \cite{SWY15}, with only the oversampling factor \( \lambda \) changed to reflect the fact that we are dealing with matrices instead of vectors.  
Next, we show that the component of the reconstruction error that is due to quantization can be made to decay root-exponentially as a function of the oversampling factor. 
\begin{cor}[Root-exponential quantization error decay]
\label{cor: root exp}
Let \( \M : \R^{n_1 \times n_2} \to \R^m \) be a mean zero, unit variance sub-Gaussian linear operator with parameter $K$ and \( X \in \R^{n_1 \times n_2} \) a rank \( k \) matrix with \( \| \M(X) \|_F \le 1\). Denote by \(Q^r_{\Sigma \Delta} \) the \(r^{th}\) order \(\Sigma\Delta \) quantizer with alphabet \( \mathcal{A} \) of step-size \(\beta\) and stability constant \( \gamma(r) \leq C^r r^r \beta \). Then there exist  constants \(c, c_1, C_1, C_2 > 0 \) that may depend on $K$, so that when
\begin{align*}
   \lambda &:= \frac{m}{\lceil ck \max(n_1, n_2) \rceil} \\
   r &:=  \left\lfloor\frac{\lambda}{2 C_1 e} \right\rfloor^{1/2} \\
   q &:= Q_{\Sigma\Delta}^{r}(\M(X)).
\end{align*}
the solution \( \hat{X} \) to \eqref{eq:decoder} satisfies \( \| \hat{X} - X \|_F \leq C_2 \beta e^{-c_1 \sqrt{\lambda}} \).
\end{cor}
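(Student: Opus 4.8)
The plan is to specialize Theorem \ref{thm: main} to the present noiseless, exactly rank-$k$ setting and then optimize the resulting bound over the quantization order $r$, exactly as in the standard $\SD$ calculus. Since $X$ has rank $k$ we have $\sigma_k(X)_* = 0$, and since $q = Q^{r}_{\SD}(\M(X))$ carries no additive measurement noise we may take $\epsilon = 0$. Setting $\ell = \lceil ck\max(n_1,n_2)\rceil$ so that $\lambda = m/\ell$ is the oversampling factor, the hypothesis $\|\M(X)\|_F \le 1$ forces $\|\M(X)\|_\infty \le 1$, which is what makes the stable quantizer produce a state sequence with $\|u\|_\infty \le \gamma(r)$ and puts us in the regime where Theorem \ref{thm: main} applies. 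With probability at least $1 - c_2 e^{-c_3 \ell}$ on the draw of $\M$, Theorem \ref{thm: main} then yields
\[
\|\hat X - X\|_F \le C(r)\left(\frac{m}{\ell}\right)^{-r+1/2}\beta = C(r)\,\beta\,\sqrt{\lambda}\,\lambda^{-r}.
\]

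The next step is to expose the $r$-dependence of the constant. Tracing the proof of Theorem \ref{thm: main} back through Theorem \ref{thm:SWY_alt}, the reconstruction error is controlled by the size of the state variable, i.e.\ it scales linearly in $\|u\|_\infty \le \gamma(r)$, with all remaining $r$-dependent factors dominated by this term. Feeding in the hypothesis $\gamma(r) \le C^r r^r \beta$, I would obtain a universal constant $C_0$ and a base constant $C_1$ so that
\[
\|\hat X - X\|_F \le C_0\,\beta\,\sqrt{\lambda}\,(C_1 r)^r\,\lambda^{-r} = C_0\,\beta\,\sqrt{\lambda}\left(\frac{C_1 r}{\lambda}\right)^r.
\]
Finally I would insert the prescribed order $r = \lfloor \lambda/(2C_1 e)\rfloor^{1/2}$, which is engineered precisely so that $r^2 \le \lambda/(2C_1 e)$, equivalently $C_1 r/\lambda \le 1/(2er)$. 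Hence for $r \ge 1$,
\[
\left(\frac{C_1 r}{\lambda}\right)^r \le \left(\frac{1}{2er}\right)^r \le e^{-r}.
\]
Since the floor costs only a constant factor, $r \ge c'\sqrt{\lambda}$ for large $\lambda$, so this gives $\|\hat X - X\|_F \le C_0\beta\sqrt{\lambda}\,e^{-c'\sqrt{\lambda}}$; absorbing the polynomial prefactor $\sqrt{\lambda}$ into a slightly smaller exponent produces the claimed $\|\hat X - X\|_F \le C_2\beta e^{-c_1\sqrt{\lambda}}$, with $C_2$ enlarged to cover the (trivial) small-$\lambda$ regime. Note that choosing $r\sim\sqrt{\lambda}$ rather than the unconstrained optimizer $r\sim\lambda$ is exactly what keeps $C_1 r/\lambda$ of order $r^{-1}$, so that the clean bound $(1/2er)^r \le e^{-r}$ is available.

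The step I expect to require the most care is the passage from $C(r)$ to $\gamma(r)/\beta$: one must confirm that the $r$-dependence of the constant in Theorem \ref{thm: main} is no worse than the stability constant, so that $\gamma(r) \le C^r r^r\beta$ is exactly the input the optimization needs. Two observations support this and are worth spelling out. First, the matrix-RIP constant produced by Lemma \ref{lem:sG concentrate} is uniform over unitary $V^*$; although $V^*$ is the right-singular-vector matrix of $D^{-r}$ and hence varies with $r$, the admissible threshold $\ell \gtrsim k\max(n_1,n_2)$ and the RIP constant do not degrade with $r$, so a single draw of $\M$ suffices for the one value of $r$ fixed by $\lambda$, with no union bound needed. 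Second, once the RIP (and hence, via Lemma \ref{lem:MatRIP_VecRIP}, the vector-RIP of the restricted operator) is in force, the error in Theorem \ref{thm:SWY_alt} depends on $r$ only through the state-variable bound $\gamma(r)$. Everything else is then the routine $\SD$ optimization of $(C_1 r/\lambda)^r$, identical to Corollary~11 of \cite{SWY15} with the vector oversampling factor replaced by $\lambda = m/\lceil ck\max(n_1,n_2)\rceil$.
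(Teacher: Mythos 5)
Your proof is correct and follows essentially the same route as the paper, which gives no separate argument but invokes Corollary 11 of \cite{SWY15} with the oversampling factor replaced by $\lambda = m/\lceil ck\max(n_1,n_2)\rceil$: specialize Theorem \ref{thm: main} with $\sigma_k(X)_*=0$ and $\epsilon=0$, track the $r$-dependence of the constant through the stability bound $\gamma(r)\le C^r r^r\beta$, and optimize $\left(C_1 r/\lambda\right)^r$ at $r\sim\sqrt{\lambda}$. Your added remarks --- that only the single $r$ fixed by $\lambda$ is needed so no union bound over the $r$-dependent $V^*$ is required, and that the remaining $r$-dependence enters only through $\gamma(r)$ --- make explicit care the paper leaves implicit.
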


Next,  Corollary \ref{cor: encoding}  shows that by projecting the quantized measurements onto a subspace of dimension $L = C k\max(n_1,n_2) \leq C'm$, where \(C, C' > 0\) are absolute constants,  we can obtain comparable reconstruction error guarantees to those of Theorem \ref{thm: main}. In turn, this allows us to obtain a reconstruction with \emph{exponentially} decaying quantization error, or distortion, as a function of the number of bits, or rate, used. We make this observation precise in Remark  \ref{rem:exp}, thereby extending the analogous result for the vector case \cite{SWY16} to our matrix setting. We comment  that, just like for sparse vectors, this exponentially decaying rate-distortion relationship is optimal for low-rank matrices over all possible encoding and decoding schemes.  

\begin{cor}[Error guarantees with encoding]
\label{cor: encoding}
Let \( \M: \R^{n_1 \times n_2} \to \R^m \) be a mean zero, unit variance sub-Gaussian linear operator with parameter $K$. Let \(B: \R^{m} \to \R^{L} \) be a Bernoulli random matrix whose entries are \( \pm 1\). Then there exist constants \(c_1, c_2, c_3, C_1, C_2 > 0 \) that may depend on $K$ and $r$,  so that whenever \( m \geq c_1 L \geq c_2 k \max(n_1, n_2) \) the following is true with probability greater than \(1 - C_1 \exp(-c_3 \sqrt{mL}) \) on the draw of \( \M \) and \(B \):

Suppose \( X \in \R^{n_1 \times n_2} \) has rank \(k \), \( \| \M(X) \|_{\infty} \le \mu < 1 \), and \(q = Q^r_{\Sigma \Delta}(\M(X) + e) \) with \( \|e\|_{\infty} \le \ep \) for \( \ep \in [0, 1-\mu) \). Then the solution of

\begin{align}\label{eq:encoder_decoder}
(\hat{X},\hat{\nu}) :=  \arg\min\limits_{(Z,\nu)}\|Z\|_* \  \text{ subject to } &  \|BD^{-r}(\M(Z) +\nu-q )\|_2 \leq 3 m \gamma(r) \notag \\
\text{\ \ and\ \ } & \|\nu\|_2\leq \epsilon \sqrt m.
\end{align} 
 satisfies
\begin{align*}
   \| \hat{X} - X \|_F \leq C_2 \left( \left( \frac{m}{L} \right)^{-r/2 + 3/4}\beta + \sqrt{\frac{m}{L}} \ep + \frac{\sigma_k(X)_*}{\sqrt{k}} \right).
\end{align*}
\end{cor}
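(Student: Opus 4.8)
The plan is to run the proof of Theorem~\ref{thm: main} almost verbatim, the one genuinely new ingredient being the random sign matrix $B$; the passage from matrices to vectors is again effected through the Oymak et al.\ restriction technique (Lemmas~\ref{lem:MatRIP_VecRIP} and~\ref{lem2}), and the analytic heavy lifting is inherited from the vector encoding estimate of \cite{SWY16}.

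First I would dispatch feasibility. Writing the $\SD$ state equation as $\M(X)+e-q = D^{r}u$ with $\|u\|_\infty \le \gamma(r)$, the encoded residual at the true signal with $\nu=e$ is exactly $BD^{-r}(\M(X)+e-q)=Bu$. Since $\|u\|_2 \le \sqrt{m}\,\gamma(r)$ and an $L\times m$ Bernoulli matrix with $L\le m$ obeys $\|B\|_{2\to2}\le 3\sqrt{m}$ with probability $1-2e^{-cm}$, we get $\|Bu\|_2 \le 3m\gamma(r)$, so $(X,e)$ is feasible for~\eqref{eq:encoder_decoder}; this is precisely what fixes the constant $3m\gamma(r)$ in the constraint, and optimality then yields $\|\hat X\|_* \le \|X\|_*$. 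Setting $W=\hat X - X$ with singular value decomposition $U_W\Sigma_W V_W^*$ and invoking Lemma~\ref{lem2} with $X_1=-U_W\Sigma_X V_W^*$, the argument of Theorem~\ref{thm: main} transfers the two feasibility inequalities together with $\|\hat X\|_*\le\|X\|_*$ to the vectors $x_1$ (the diagonal of $-\Sigma_X$) and $x^\sharp=x_1+w$ (with $w$ the diagonal of $\Sigma_W$), rephrasing them for the encoded restricted operator $B(D^{-r}\M)_{U_W,V_W}$ and leaving $\|x^\sharp\|_1 \le \|x_1\|_1$.

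The crux --- and the step I expect to be the main obstacle --- is producing, uniformly over the unknown unitary pair $(U_W,V_W)$, the RIP needed to invoke the vector encoding estimate of \cite{SWY16}. As in Theorem~\ref{thm: main}, uniformity over $(U_W,V_W)$ should be obtained by proving a matrix-RIP statement for the encoded operator and then passing to all restrictions via Lemma~\ref{lem:MatRIP_VecRIP}. Exploiting the independence of $\M$ and $B$, I would build this matrix-RIP from two ingredients: first, the matrix-RIP of $\frac{1}{\sqrt{\ell}}P_\ell V^*\M$ supplied by Lemma~\ref{lem:sG concentrate}, where $D^{-r}=U\Sigma V^*$ and $P_\ell$ projects onto the $\ell \sim k\max\{n_1,n_2\}$ leading singular directions of $D^{-r}$, which are mutually comparable and dominate the tail; and second, a Johnson--Lindenstrauss bound showing that $\frac{1}{\sqrt{L}}B$ is a near-isometry on the fixed span of the top $\ell$ left singular vectors of $D^{-r}$, which is possible precisely because $L \ge c_2 k\max\{n_1,n_2\} \gtrsim \ell$ leaves enough room for the embedding. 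Composing these --- with $\|B\|_{2\to2}\sim\sqrt{m}$ used to absorb the contribution of the singular-value tail of $D^{-r}$ --- yields matrix-RIP for the appropriately normalized encoded operator, and it is this tail contribution together with the extra $\sqrt{m}$ factor that degrades the quantization exponent from $-r+1/2$ to $-r/2+3/4$ and inserts the $\sqrt{m/L}$ loss into the noise term.

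Finally, with the encoded matrix-RIP in hand, Lemma~\ref{lem:MatRIP_VecRIP} furnishes the vector-RIP of the restricted encoded operators for every $(U_W,V_W)$ simultaneously, so the vector encoding estimate of \cite{SWY16}, applied to the feasible $x^\sharp$ satisfying $\|x^\sharp\|_1 \le \|x_1\|_1$, bounds $\|\hat X - X\|_F = \|W\|_F = \|w\|_2$ by the claimed quantity. The failure probability $C_1\exp(-c_3\sqrt{mL})$ then follows by a union bound over the three underlying events --- the operator-norm control $\|B\|_{2\to2}\le 3\sqrt{m}$, the matrix-RIP of Lemma~\ref{lem:sG concentrate}, and the concentration of the order-two chaos $BD^{-r}\M$ --- the last of which, being a product of two independent random matrices, concentrates at the geometric-mean rate and is the source of the $\sqrt{mL}$ exponent.
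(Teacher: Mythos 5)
Your proposal is correct and follows essentially the same route as the paper: bound $\|Bu\|_2\le\|B\|_{2\to2}\|u\|_2\le 3m\gamma(r)$ to establish feasibility, apply Lemma~\ref{lem2} and the $(U_W,V_W)$ restriction exactly as in Theorem~\ref{thm: main} to reduce to a vector problem with $\Phi=\M_{U_W,V_W}$ and $q$ replaced by $y_1=BD^{-r}(\M(X_1)+e)+Bu$, and then invoke the vector encoding result (Theorem~\ref{thm: encode_err}). The only difference is one of emphasis: the paper treats Theorem~\ref{thm: encode_err} as a black box for all of the probabilistic content, whereas you sketch a re-derivation of the encoded operator's RIP and the uniformity over $(U_W,V_W)$ --- a step the paper leaves implicit but which your outline fills in along the same lines as the proof of Theorem~\ref{thm: main}.
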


\begin{remark}\label{rem:exp}
Let $\alpha :=\max_{a\in\A} \|a\|_\infty$. A simple calculation shows that one needs a rate of at most \( \mathcal{R} = L r \log_2(\alpha m) \) bits to store the encoded measurements. This demonstrates that in the noise free setting, and with rank $k$ matrices, the distortion $\mathcal{D} := \|\hat{X}-X\|_F$ satisfies
\begin{align}\label{eq: exp}
\mathcal{D} \leq \left(\frac{1}{\alpha L} 2^{\frac{\mathcal{R}}{Lr}}\right)^{-r/2 + 3/4}.
\end{align}
That is, the distortion decays exponentially with respect to the rate provided \( r \ge 2 \).
\end{remark}

The proof of the above corollary follows from a combination of Theorem 12 in \cite{SWY16}, which we state below, and an argument similar to the proof of Theorem \ref{thm: main}.

\begin{thm}[\cite{SWY16}] \label{thm: encode_err}
   Let \( \Phi \) be a \( m \times n \) sub-Gaussian matrix with mean zero and unit variance entries with parameter $K$, and let \( B \) be a \( L \times m \) Bernoulli matrix with \( \pm 1 \) entries. Moreover, let \( k \in \{1,..,\min\{m,n\} \}. \)
   
   Denote by \(Q_{\Sigma \Delta}^{r}\) a stable rth-order scheme with \( r > 1\), alphabet \( \A \) and stability constant \( \gamma(r) \). There exist positive constants \( C_1, C_2, C_3, C_4\) and \(c_1\) such that whenever \( \frac{m}{C_2} \ge L \ge C_1 k \log(n/k)\) the following holds with probability greater than \(1- C_3 e^{-c_1 \sqrt{mL}} \) on the draws of \( \Phi \) and \(B \):
   
   Suppose that \(x \in \R^n \), \(e \in \R^m \) with \(\| \Phi x \|_{\infty} \le \mu < 1 \) and that \(q := Q_{\Sigma \Delta}^{r}(\Phi x + e) \) where \( \|e\|_{\infty} \le \ep \) for some \(0 \le \ep < 1 - \mu \). Then the solution \( \hat{x} \) to
\begin{align}\label{eq: encoder_opt}
(\hat{x},\hat{\nu}) :=  \arg\min\limits_{(z,\nu)}\|z\|_1 \  \text{ subject to } &  \|BD^{-r}(\Phi(z) +\nu-q )\|_2 \leq 3m\gamma(r) \notag \\
\text{\ \ and\ \ } & \|\nu\|_2\leq \epsilon \sqrt m.
\end{align} 
satisfies
\begin{align*}
   \| \hat{x} - x \|_2 \leq C_4 \left( \left( \frac{m}{L} \right)^{-r/2 + 3/4}\beta + \sqrt{\frac{m}{L}} \ep + \frac{\sigma_k(x)}{\sqrt{k}} \right).
\end{align*}

\end{thm}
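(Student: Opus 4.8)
The plan is to mirror the proof of Theorem \ref{thm:SWY}, replacing the deterministic selection of the top $\ell$ singular directions of $D^{-r}$ (encoded there by the contraction $\frac{1}{\sqrt\ell}P_\ell V^*$) with the random Bernoulli embedding $B$, and then to run the standard RIP-based $\ell_1$ analysis. First I would record the two structural facts used throughout: the $\SD$ state equation gives $D^{-r}(\Phi x + e - q) = u$ with $\|u\|_\infty \le \gamma(r)$, hence $\|u\|_2 \le \gamma(r)\sqrt m$; and, with high probability, $\|B\|_{2\to 2} \le 2\sqrt m$ for an $L\times m$ Bernoulli matrix with $L \le m$. Together these show the true pair $(x,e)$ is feasible for \eqref{eq: encoder_opt}, since $\|BD^{-r}(\Phi x + e - q)\|_2 = \|Bu\|_2 \le 2\sqrt m\cdot\gamma(r)\sqrt m \le 3m\gamma(r)$ and $\|e\|_2 \le \epsilon\sqrt m$. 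Optimality then yields $\|\hat x\|_1 \le \|x\|_1$, and the usual cone/tube argument gives $\|h_{S^c}\|_1 \le \|h_S\|_1 + 2\sigma_k(x)$ for $h := \hat x - x$ and $S$ the best $k$-term support of $x$.

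The heart of the argument is a composite restricted-isometry estimate for $\frac{1}{\sqrt L}BD^{-r}\Phi$ on the recovery cone, which I would build in two layers. The inner layer is the unencoded $\SD$ mechanism: writing $D^{-r} = U\Sigma V^*$ and using that $\frac{1}{\sqrt\ell}P_\ell V^*\Phi$ has the vector-RIP of order $k$ (which holds for mean-zero unit-variance sub-Gaussian $\Phi$ once $\ell \gtrsim k\log(n/k)$, by the vector form of the Dudley/$\gamma_2$ chaining argument of Lemma \ref{lem:sG concentrate} via Lemma \ref{lem:KMR}), together with the standard $\SD$ singular-value decay $\sigma_\ell(D^{-r}) \asymp (m/\ell)^r$, gives the lower bound $\|D^{-r}\Phi h\|_2 \ge \sigma_\ell\|P_\ell V^*\Phi h\|_2 \gtrsim \sigma_\ell\sqrt\ell\,\|h\|_2$ on the cone. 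The outer layer is a Johnson--Lindenstrauss estimate: since $h$ ranges over an enlargement of the $k$-sparse vectors, $D^{-r}\Phi h$ lives in a union of $\binom{n}{k}$ subspaces of dimension $\le k$, so once $L \gtrsim k\log(n/k)$ the matrix $B$ preserves the norms of all such vectors up to $(1\pm\delta)\sqrt L$ with probability $1 - C_3 e^{-c_1\sqrt{mL}}$. Composing the two layers and choosing the internal subspace dimension $\ell \asymp L$ yields, on the cone, $\frac{1}{\sqrt L}\|BD^{-r}\Phi h\|_2 \gtrsim \sigma_\ell\sqrt\ell\,\|h\|_2$.

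With the RIP in hand I would assemble the error bound exactly as in Theorem \ref{thm:SWY_alt}. Subtracting the feasibility constraints for $(\hat x,\hat\nu)$ and $(x,e)$ and using $\Phi x + e - q = D^r u$ bundles signal, noise and quantization into one controlled quantity, $\|BD^{-r}(\Phi h + (\hat\nu - e))\|_2 \le 5m\gamma(r)$. The quantization contribution $\sim m\gamma(r)$, divided by the RIP constant $\sim \sigma_\ell\sqrt{\ell L}$ and combined with $\sigma_\ell \asymp (m/L)^r$, produces the term $(m/L)^{-r/2+3/4}\beta$; the extra factor of $m$ (rather than $\sqrt m$) in the feasibility budget, coming from $\|B\|_{2\to2}\asymp\sqrt m$, is exactly what shifts the exponent from the unencoded $-(r-1/2)$ to $-r/2 + 3/4$. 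The noise $\hat\nu - e$ I would \emph{not} bundle through $BD^{-r}$ (whose operator norm is $\asymp \sigma_1(D^{-r})\sqrt L$); instead I separate it only after projecting onto the top-$\ell$ singular subspace, where it re-enters through the contraction $P_\ell V^*$ and contributes $\frac{1}{\sqrt L}\cdot 2\epsilon\sqrt m = \sqrt{m/L}\,\epsilon$. The cone inequality contributes the final $\sigma_k(x)/\sqrt k$, giving the bound stated in Theorem \ref{thm: encode_err}.

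The main obstacle is precisely the interaction flagged above: $D^{-r}$ has enormous dynamic range ($\sigma_1/\sigma_\ell \asymp \ell^r$), so neither the composite RIP nor the noise term can be obtained from crude operator-norm bounds on $BD^{-r}$. The RIP requires threading the sub-Gaussian concentration of $\Phi$ and the $\pm 1$ concentration of $B$ through the singular-value profile of $D^{-r}$, which is where the chaining estimates (Lemma \ref{lem:KMR}, Dudley's inequality) do the real work and where the choice $\ell \asymp L$ must be balanced against both the RIP threshold $\ell \gtrsim k\log(n/k)$ and the embedding threshold $L \gtrsim k\log(n/k)$. The noise term forces the bundling-then-projection device, so that the amplifying operator $D^{-r}$ is inverted only on the subspace where it is boundedly invertible. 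The square-root weakening of the exponent relative to Theorem \ref{thm:SWY} is the unavoidable price of passing through the $L$-dimensional Bernoulli embedding with the $\|B\|_{2\to2}\asymp\sqrt m$ inflated quantization budget.
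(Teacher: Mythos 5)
First, note that the paper itself does not prove this statement: it is imported verbatim as Theorem 12 of \cite{SWY16}, so your attempt has to be measured against what that proof actually requires. Your opening moves are fine (feasibility via $\|Bu\|_2 \le \|B\|_{2\to2}\|u\|_2 \le 3m\gamma(r)$, the cone condition from $\|\hat x\|_1 \le \|x\|_1$, and the inner-layer bound $\|D^{-r}\Phi h\|_2 \ge \sigma_\ell \|P_\ell V^* \Phi h\|_2$), but the core of your plan is internally inconsistent with the statement you are proving. If your two-layer composition really gave $\frac{1}{\sqrt L}\|BD^{-r}\Phi h\|_2 \gtrsim \sigma_\ell\sqrt{\ell}\,\|h\|_2$ with $\ell \asymp L$ and $\sigma_\ell \asymp (m/L)^r$, then dividing the quantization budget $\sim m\gamma(r)$ by $\sqrt{L}\,\sigma_\ell\sqrt{\ell} \asymp L\,(m/L)^r$ yields the exponent $-r+1$, not $-r/2+3/4$; the two agree only at $r=1/2$, so your claim that this bookkeeping ``produces the term $(m/L)^{-r/2+3/4}\beta$'' does not follow from your own construction. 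Both the stated exponent and the failure probability $e^{-c_1\sqrt{mL}}$ encode that the correct working rank is the geometric mean $\bar\ell \asymp \sqrt{mL}$: then $m\gamma(r)/\bigl(\sigma_{\bar\ell}\sqrt{\bar\ell\, L}\bigr) \asymp (m/L)^{3/4 - r/2}$, which also explains the paper's remark that the ``correct'' exponent is conjectured to be $-r/2+1/4$, i.e.\ the unencoded rate $-(r-1/2)$ evaluated at oversampling $m/\bar\ell = \sqrt{m/L}$. Your JL event at $L \gtrsim k\log(n/k)$ would in any case hold only with probability $1-e^{-cL}$, weaker than stated.

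Two steps would moreover fail even at your chosen scale. The union-of-subspaces JL claim is applied to the wrong vector: $h = \hat x - x$ lies only in the $\ell_1$ cone/tube, not among $k$-sparse vectors, so $\{D^{-r}\Phi h\}$ is not contained in $\binom{n}{k}$ subspaces of dimension $k$ --- the paper's own remark following the theorem explicitly confines the JL route to strictly sparse signals (and even sparsity of $x$ does not make $\hat x - x$ sparse). The standard repair --- decompose $h$ into sparse blocks and run a two-sided RIP argument with cross-term estimates --- is unavailable here, because the composite operator $BD^{-r}\Phi$ is dominated on sparse vectors by the top singular directions of $D^{-r}$: its restricted condition number is of order $\sigma_1/(\sigma_\ell\sqrt{\ell}) \asymp \ell^{\,r-1/2}$, so no normalization yields a small RIP constant. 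Finally, your ``bundling-then-projection'' treatment of the noise is not an available operation: once the constraint is phrased through $\|BD^{-r}(\cdot)\|_2$, there is no contraction recovering $P_\ell V^* w$ from $BD^{-r}w$, since $B$ has an $(m-L)$-dimensional kernel. A partial left-inversion via the pseudo-inverse of $BU_\ell$ (first $\ell$ columns of $U$, $\ell \le cL$) introduces a cross term $(BU_\ell)^\dagger B U_{>\ell}\Sigma_{>\ell}$ of operator norm $\sim \sqrt{m/L}\;\sigma_\ell$, acting on a vector whose norm your outline does not control. Producing a joint lower bound for $\|BD^{-r}(\Phi h + \nu')\|_2$ over the cone and the noise ball at the $\sqrt{mL}$ scale is precisely the technical core of the proof in \cite{SWY16}, and it is the piece missing from your proposal.
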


\begin{remark}
As before, the requirements on \( q \) can be relaxed so that it is any vector satisfying the relation \( \Phi x + \omega + D^{r} u = q  \) with $\|\Phi x\|_\infty \leq \mu <1$, $\|\omega\|_\infty\leq \epsilon<1-\mu$, and \( \| u \|_{\infty} \le \gamma(r) < \infty \).
\end{remark}

\begin{remark}
If one restricts the scope of Theorem \ref{thm: encode_err} to strictly sparse vectors, the constraint in \eqref{eq: encoder_opt} can be relaxed to \(\|BD^{-r}(\Phi(z) +\nu-q )\|_2 \leq 3\sqrt{mL} \gamma(r) \) through a Johnson-Lindenstrauss embedding argument. See the proof of Theorem 16 in \cite{SWY16} for details.
\end{remark}

\begin{proof}[Proof of Corollary \ref{cor: encoding}]
   We know (see for example  \cite{SWY16}) that  with probability exceeding \( 1 - c_2 \exp(-c_1 \sqrt{mL}) \) that \( \|B\|_{2\to 2} \le \sqrt{L} + 2 \sqrt{m} \). For such \( B \), we have
   \begin{align*}
      \| Bu \|_2 \le \|B\|_{2 \to 2}\|u\|_{\infty} \le 3m \gamma(r).
   \end{align*}
   Define, as before, the following:
   \begin{align*}
      W &= X^{\sharp} - X = U_{W} \Sigma_{W} V_{W}^*\\
      X &= U_{X} \Sigma_{X} V_{X}^*\\
      X_1 &= -U_{W} \Sigma_{X} V_{W}^*.
   \end{align*}
   By Lemma \ref{lem2}, \( \| X_1 + W \|_* \le \|X_1\|_* \). Now, define
   \begin{align*}
      y_1 = B D^{-r}\Big( \M(X_1) + e \Big) + Bu.
   \end{align*}
   By linearity of \( \M \),
   \begin{align*}
      \left\| BD^{-r}\Big( \M(X_1 + W) + \nu^{\sharp} \Big) - y_1 \right\|_2 &= \left\|B \Big(D^{-r}\big(\M(X_1 + W) + \nu^{\sharp}\big) - D^{-r}\big(\M(X_1) + e \big) - u \Big)\right\|   _2 \\
      &= \left\| BD^{-r}\Big( \M(X^{\sharp}) + \nu^{\sharp} - q \Big) \right\|_2 \le 3m\gamma(r).   
   \end{align*}
   Letting \( x_1 \) denote the vector of diagonal  elements of \( -\Sigma_X \) and \(w \) that of \( \Sigma_{W} \), we have
   \begin{align*}
      y_1 = (BD^{-r}\M)_{U_W, V_W}(x_1) + BD^{-r}e + Bu.
   \end{align*}
   Just as in the proof of the main theorem, we remark that
   \begin{align*}
    \left \| BD^{-r}\Big( \M_{U_W, V_W}(x_1) + e \Big) - y_1 \right\|_2 = \left\| Bu \right\|_2 \le 3 m \gamma(r).
   \end{align*}
   In other words, both \( x^{\sharp} := x_1 + w \) and \( x_1 \) are feasible to \eqref{eq: encoder_opt} with \( \Phi \) set to \(M_{U_W,V_W} \) and \(q \) set to \( y_1\). Moreover, we also have \( \|x_1 + w\|_1 = \| X_1 + W \|_* \le \|X_1\|_* = \|x_1\|_1 \). The result now follows by Theorem \ref{thm: encode_err}.

\end{proof}
\section{Numerical Experiments}
Herein, we present the results of a series of numerical experiments. The goal is to illustrate the performance of the algorithms studied in this paper and to compare their empirical performance to the error bounds (up to constants) predicted by the theory.  All tests were performed in MATLAB using the CVX package. One thing worth noting is that, in the interest of numerical stability and computational efficiency, we modified the constraint in \eqref{eq:decoder} to be 
\begin{align*}
\sigma_{\ell} \|P_{\ell} V^*(\M(X) + \nu - q)\|_2 \le \gamma(r) \sqrt{m},
\end{align*}
where \( \sigma_{\ell}\) is the \(\ell\)-th singular value of \( D^{-r}\). The motivation for this is that as \( r \) increases, \( D^{r} \) quickly becomes ill-conditioned. The analysis and conclusions  of Theorem \ref{thm: main} remain unchanged with the above modification. The only additional cost is computing the singular value decomposition of \( D^{-r} \) before beginning the optimization. For a fixed value of $m$ this needs to be done only once as the result can be stored and re-used. 

To construct rank $k$ matrices, we sampled \(\alpha_1, \hdots, \alpha_k \sim \mathcal{N}(0,1)\), \(u_1, \hdots, u_k \sim \mathcal{N}(0, I_{n_1 \times n_1})\), \( v_1, \hdots, v_k \sim \mathcal{N}(0, I_{n_2 \times n_2}) \), and set \( X := \sum_{i=1}^{k} \alpha_i u_i v_i^*\). We note that under these conditions \( \mathbb{E}\|X\|_F^2 = k \cdot n_1\cdot n_2 \). The measurements we collect are via a Gaussian linear operator $\M$ whose matrix representation consists of i.i.d. standard normal entries. For each experiment, we use a \textit{fixed} draw of \(M\).


First, we illustrate the decay of the reconstruction error, measured in the Frobenius norm, as a function of the order of the \(\SD\) quantization scheme for \( r = 1, 2\), and \(3\) in the noise-less setting. Experiments were run with the following parameters: \(n_1 = n_2 = 20 \), \( \ep = 0 \), alphabet step-size \(\beta = 1/2 \), rank \( k = 5\), and \( \ell = 4 \cdot k \cdot n_1 \). We let the over-sampling factor \( \frac{m}{\ell}\) range from \(5\) to \(60\) by a step size of \(5\). The reconstruction error for a fixed over-sampling factor was averaged over 20 draws of \(X\). The results are reported in Figure \ref{fig: simul_errplot} for the three choices of $r$. As Theorem \ref{thm: main}  predicts, the reconstruction error decays polynomially in the oversampling rate, with the polynomial degree increasing with $r$.

\begin{figure}[]
\centering
\begin{minipage}[b]{0.45\textwidth}
\includegraphics[scale=0.45]{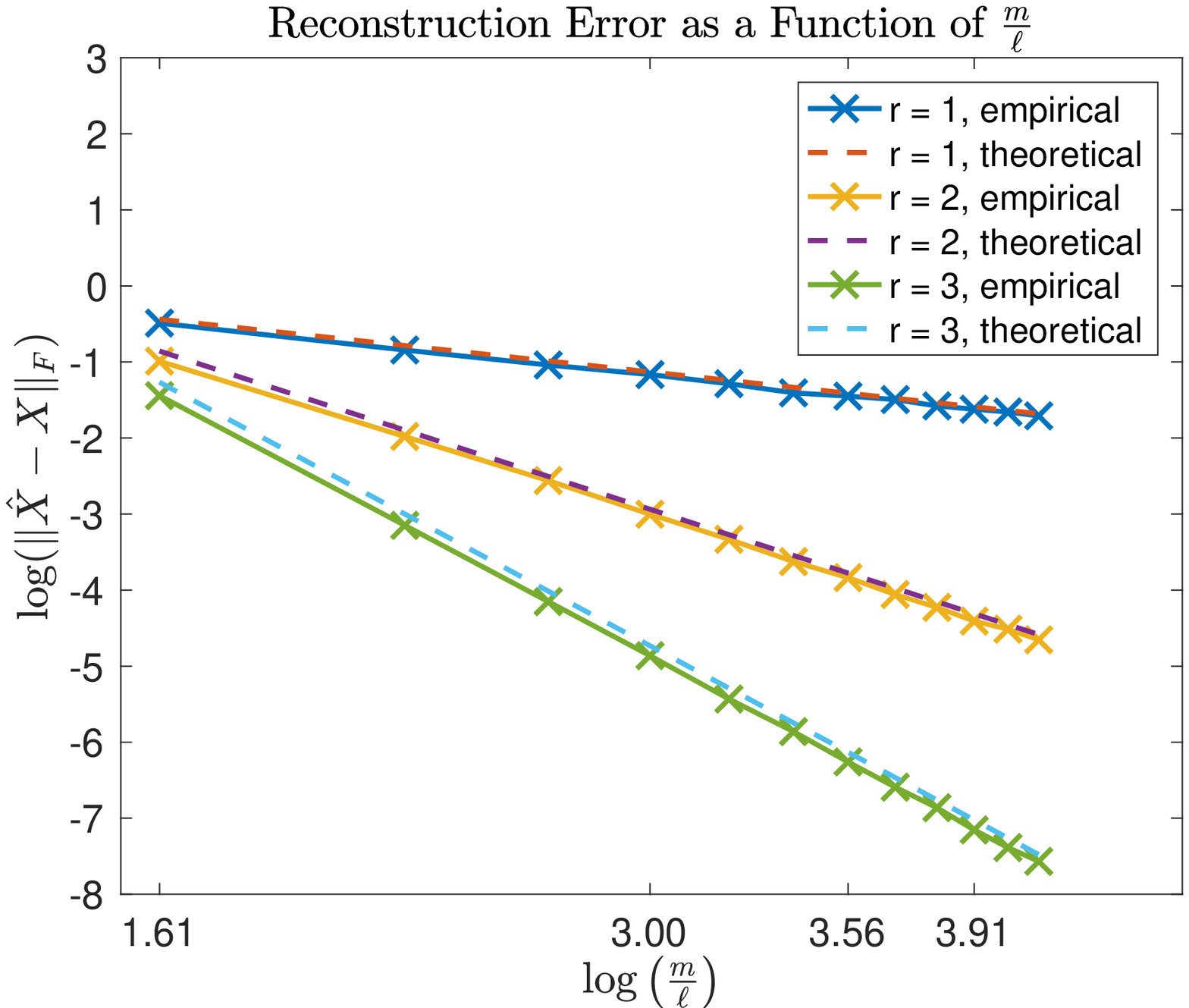}
\caption{Log-log plot of the reconstruction error as a function of the oversampling factor \( \frac{m}{\ell} \) for \( r = 1,2,3\). Polynomial relationships will appear as linear relationships.}
\label{fig: simul_errplot}
\end{minipage}
\hfill
\begin{minipage}[b]{0.45\textwidth}
\includegraphics[scale=0.45]{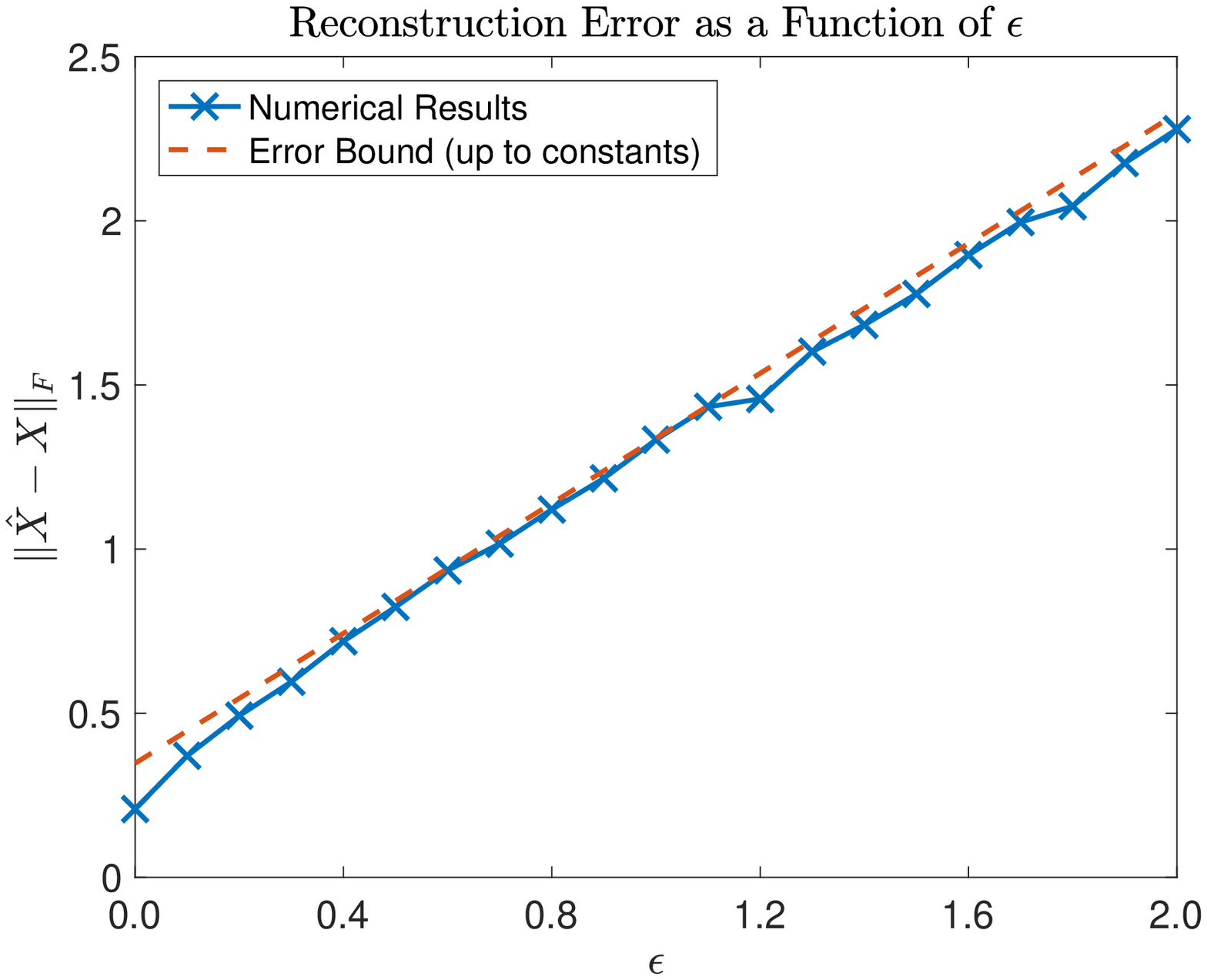}
\caption{Reconstruction error as a function of the input noise. The rank of the true matrix is 2.}
\label{fig: err_eps}
\end{minipage}
\end{figure}


To test the dependence on measurement noise, we considered reconstructing \(20 \times 20\) matrices from measurements generated by a fixed draw of \(\mathcal{M}\). 
For \(\ep \in \{0, 1/10, \hdots, 2\}\), we averaged our reconstruction error over 20 trials with noise vectors \( \nu\) drawn from the uniform distribution on \((0,1)^m\) and normalized to have \(\| \nu \|_{\infty} = \ep \).  The remaining parameters were set to the following values: \( r = 1 \), alphabet step-size \(\beta = 1/2 \), rank \(k = 2\), \( \ell = 4 \cdot k \cdot n_1\), and \( m = 2 \ell \). Figure \ref{fig: err_eps} illustrates the outcome of this experiment, which  agrees with the bound in Theorem \ref{thm: main}.

The goal of the next experiment  is to illustrate, in the context of encoding (Corollary \ref{cor: encoding}), the exponential decay of distortion as a function of the rate, or equivalently of the reconstruction error as a function of the number of bits (i.e., rate) used.  We performed numerical simulations for \( \SD \) schemes of order 2 and 3. As before, our parameters were set to the following: \( n_1 = n_2 = 20 \),  \(\beta = 1/2 \), rank of the true matrix \( k = 5 \), \( L = 4 \cdot k \cdot n_1 \), \(\ep = 0 \), and let \( \frac{m}{L}\) range from \(5\) to \(60\) by a step size of \(5\). The rate is calculated to be \( \mathcal{R} = L \cdot r \cdot \log(m)\). Again, the reconstruction error for a fixed over-sampling factor was averaged over 20 draws of \(X\). The results are shown in Figures \ref{fig: r2_encode_errplot} and \ref{fig: r3_encode_errplot}, respectively. 
The slopes of the lines (corresponding to the constant in the exponent in the rate-distortion relationship) which pass through the first and last points of each plot are \(-1.8\times 10^{-3}\) and \(-2.0\times 10^{-3}\) for \(r=2,3\), respectively. It should further be noted that the numerical distortions decay much faster than the upper bound of \eqref{eq: exp}. We suspect this to be due to the sub-optimal $r$-dependent constants in the exponent of \eqref{eq: exp}, which are likely an artifact of the proof technique in \cite{SWY16}. Indeed, there is evidence in \cite{SWY16} that the correct exponent is \(-r/2 + 1/4\) rather than \( -r/2 + 3/4\). This is more in line with our numerical exerpiments but the proof of such is beyond the scope of this paper.

\begin{figure}[]
\centering
\begin{minipage}[b]{0.45\textwidth}
\includegraphics[scale=0.45]{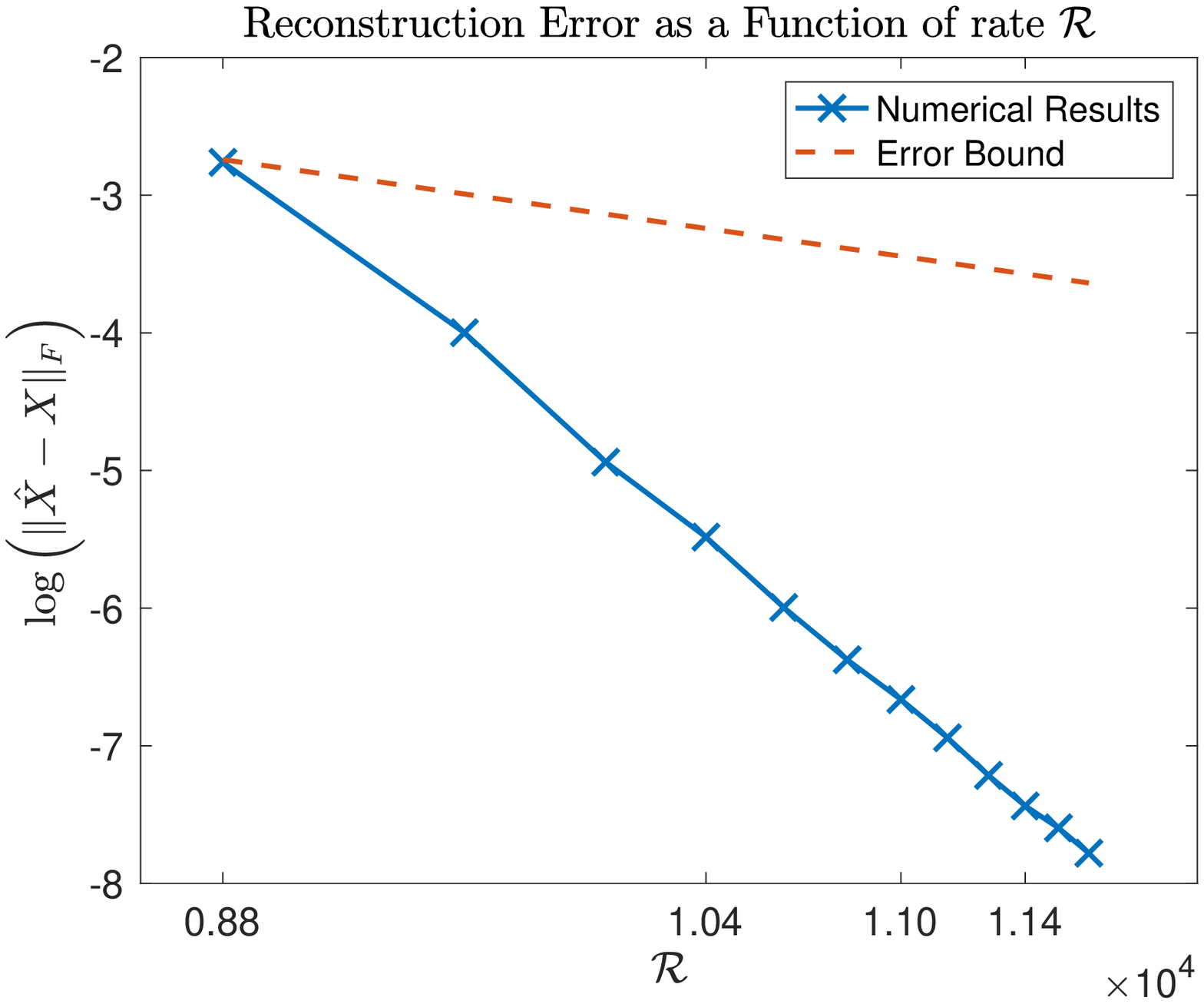}
\caption{Log plot of the reconstruction error as a function of the bit rate \( \mathcal{R} = L r \log(m) \) for \( r = 2\). Exponential relationships will appear as linear relationships.}
\label{fig: r2_encode_errplot}
\end{minipage}
\hfill
\begin{minipage}[b]{0.45\textwidth}
\includegraphics[scale=0.45]{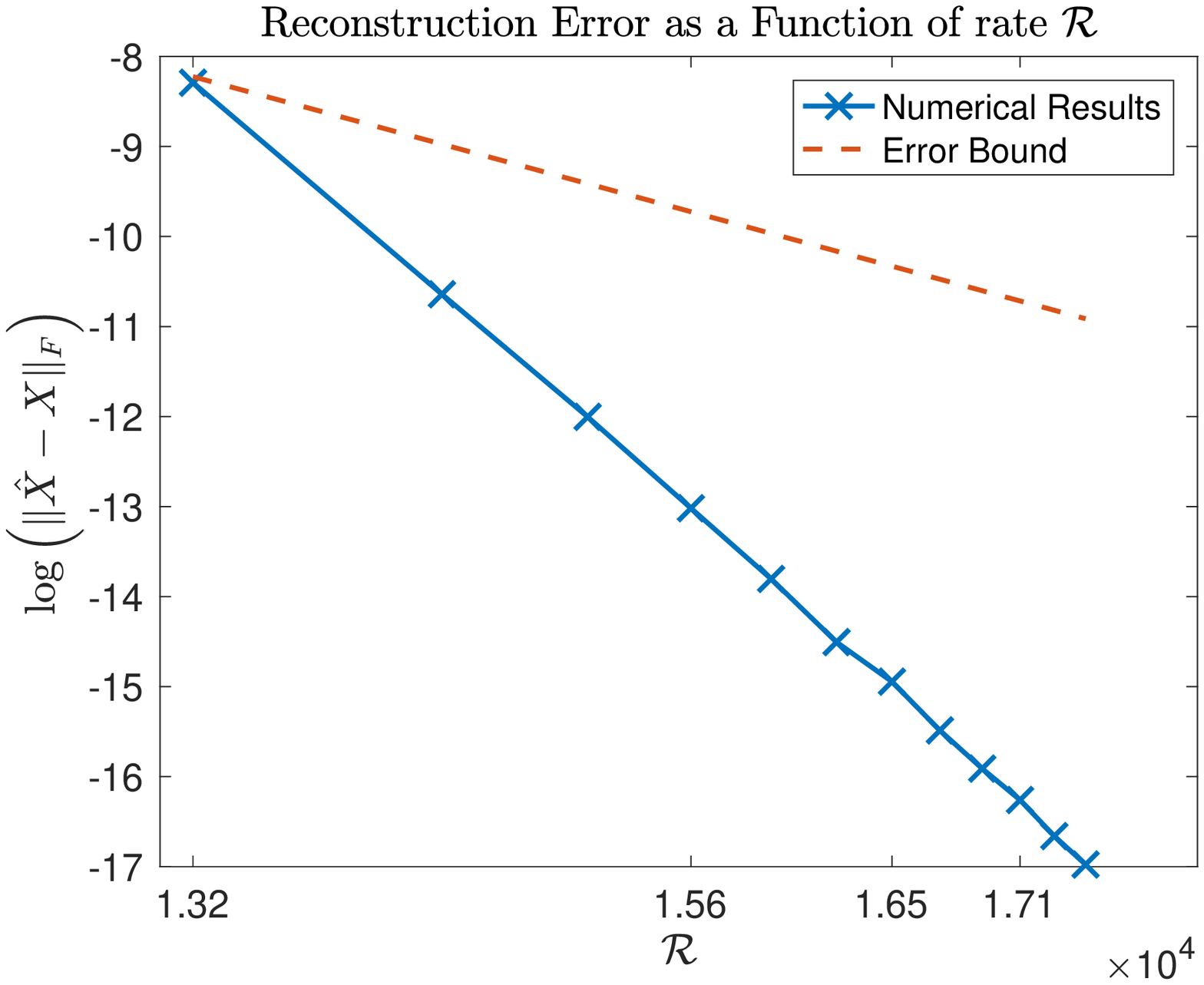}
\caption{Log plot of the reconstruction error as a function of the bit rate \( \mathcal{R} = L r \log(m) \) for \( r = 3\). Exponential relationships will appear as linear relationships.}
\label{fig: r3_encode_errplot}
\end{minipage}
\end{figure}


\section*{Acknowledgements}
\noindent RS was supported in part by the NSF via DMS-1517204. Additionally, both authors acknowledge support from a UCSD senate research grant award.

\bibliographystyle{plain}
\bibliography{lowrankSD}

\begin{thebibliography}{10}

\bibitem{BYP04}
John~J Benedetto, O~Yilmaz, and Alexander~M Powell.
\newblock Sigma-delta quantization and finite frames.
\newblock In {\em Acoustics, Speech, and Signal Processing, 2004.
  Proceedings.(ICASSP'04). IEEE International Conference on}, volume~3, pages
  iii--937. IEEE, 2004.

\bibitem{BJ15}
Sonia~A Bhaskar and Adel Javanmard.
\newblock 1-bit matrix completion under exact low-rank constraint.
\newblock In {\em Information Sciences and Systems (CISS), 2015 49th Annual
  Conference on}, pages 1--6. IEEE, 2015.

\bibitem{BJKS14}
Petros~T Boufounos, Laurent Jacques, Felix Krahmer, and Rayan Saab.
\newblock Quantization and compressive sensing.
\newblock In {\em Compressed Sensing and its Applications}, pages 193--237.
  Springer, 2015.

\bibitem{CZ13}
Tony Cai and Wen-Xin Zhou.
\newblock A max-norm constrained minimization approach to 1-bit matrix
  completion.
\newblock {\em The Journal of Machine Learning Research}, 14(1):3619--3647,
  2013.

\bibitem{CRT05}
E.~J. Cand{\`e}s, J.~Romberg, and T.~Tao.
\newblock Stable signal recovery from incomplete and inaccurate measurements.
\newblock {\em Comm.~Pure Appl. Math.}, 59:1207--1223, 2006.

\bibitem{candesPCA}
Emmanuel~J Cand{\`e}s, Xiaodong Li, Yi~Ma, and John Wright.
\newblock Robust principal component analysis?
\newblock {\em Journal of the ACM (JACM)}, 58(3):11, 2011.

\bibitem{candes2011tight}
Emmanuel~J Candes and Yaniv Plan.
\newblock Tight oracle inequalities for low-rank matrix recovery from a minimal
  number of noisy random measurements.
\newblock {\em Information Theory, IEEE Transactions on}, 57(4):2342--2359,
  2011.

\bibitem{CR09}
Emmanuel~J Cand{\`e}s and Benjamin Recht.
\newblock Exact matrix completion via convex optimization.
\newblock {\em Foundations of Computational mathematics}, 9(6):717, 2009.

\bibitem{CT10}
Emmanuel~J Cand{\`e}s and Terence Tao.
\newblock The power of convex relaxation: Near-optimal matrix completion.
\newblock {\em IEEE Transactions on Information Theory}, 56(5):2053--2080,
  2010.

\bibitem{daub-dev}
I.~Daubechies and R.~DeVore.
\newblock {Approximating a bandlimited function using very coarsely quantized
  data: a family of stable sigma-delta modulators of arbitrary order}.
\newblock {\em Ann. Math.}, 158(2):679--710, 2003.

\bibitem{davenport14}
Mark~A Davenport, Yaniv Plan, Ewout Van Den~Berg, and Mary Wootters.
\newblock 1-bit matrix completion.
\newblock {\em Information and Inference: A Journal of the IMA}, 3(3):189--223,
  2014.

\bibitem{DGK10}
P.~{D}eift, C.~S. G{\"u}nt{\"u}rk, and F.~{K}rahmer.
\newblock An optimal family of exponentially accurate one-bit sigma-delta
  quantization schemes.
\newblock {\em Comm.~Pure Appl. Math.}, 64(7):883--919, 2011.

\bibitem{Dudley67}
Richard~M Dudley.
\newblock The sizes of compact subsets of hilbert space and continuity of
  gaussian processes.
\newblock {\em Journal of Functional Analysis}, 1(3):290--330, 1967.

\bibitem{fazel2002}
Maryam Fazel.
\newblock {\em Matrix rank minimization with applications}.
\newblock PhD thesis, PhD thesis, Stanford University, 2002.

\bibitem{FKS17}
Joe-Mei Feng, Felix Krahmer, and Rayan Saab.
\newblock Quantized compressed sensing for partial random circulant matrices.
\newblock {\em arXiv preprint arXiv:1702.04711}, 2017.

\bibitem{FR13}
Simon Foucart and Holger Rauhut.
\newblock {\em A mathematical introduction to compressive sensing}.
\newblock Birkh{\"a}user Basel, 2013.

\bibitem{GLFB10}
D.~{Gross}, Y.-K. {Liu}, S.~T. {Flammia}, S.~{Becker}, and J.~{Eisert}.
\newblock {Quantum State Tomography via Compressed Sensing}.
\newblock {\em Physical Review Letters}, 105(15):150401, October 2010.

\bibitem{G10}
C~Sinan G{\"u}nt{\"u}rk, Mark Lammers, Alex Powell, Rayan Saab, and
  {\"O}zg{\"u}r Yilmaz.
\newblock Sigma delta quantization for compressed sensing.
\newblock In {\em Information Sciences and Systems (CISS), 2010 44th Annual
  Conference on}, pages 1--6. IEEE, 2010.

\bibitem{G-exp}
C.S. G{\"u}nt{\"u}rk.
\newblock One-bit sigma-delta quantization with exponential accuracy.
\newblock {\em Comm.~Pure Appl. Math.}, 56(11):1608--1630, 2003.

\bibitem{inose1963unity}
H.~Inose and Y.~Yasuda.
\newblock {A unity bit coding method by negative feedback}.
\newblock {\em Proceedings of the IEEE}, 51(11):1524--1535, 1963.

\bibitem{IS13}
Mark Iwen and Rayan Saab.
\newblock Near-optimal encoding for sigma-delta quantization of finite frame
  expansions.
\newblock {\em Journal of Fourier Analysis and Applications}, 19(6):1255--1273,
  2013.

\bibitem{JL84}
William~B Johnson and Joram Lindenstrauss.
\newblock Extensions of lipschitz mappings into a hilbert space.
\newblock {\em Contemporary mathematics}, 26(189-206):1, 1984.

\bibitem{KRM10}
Raghunandan~H Keshavan, Andrea Montanari, and Sewoong Oh.
\newblock Matrix completion from noisy entries.
\newblock {\em Journal of Machine Learning Research}, 11(Jul):2057--2078, 2010.

\bibitem{KMRsuprema}
Felix Krahmer, Shahar Mendelson, and Holger Rauhut.
\newblock Suprema of chaos processes and the restricted isometry property.
\newblock {\em Communications on Pure and Applied Mathematics},
  67(11):1877--1904, 2014.

\bibitem{KSW12}
Felix Krahmer, Rayan Saab, and Rachel Ward.
\newblock Root-exponential accuracy for coarse quantization of finite frame
  expansions.
\newblock {\em IEEE Transactions on Information Theory}, 58(2):1069--1079,
  2012.

\bibitem{KSY14}
Felix Krahmer, Rayan Saab, and {\"O}zg{\"u}r Yilmaz.
\newblock Sigma--delta quantization of sub-gaussian frame expansions and its
  application to compressed sensing.
\newblock {\em Information and Inference: A Journal of the IMA}, 3(1):40--58,
  2014.

\bibitem{LKMS14}
Jean Lafond, Olga Klopp, Eric Moulines, and Joseph Salmon.
\newblock Probabilistic low-rank matrix completion on finite alphabets.
\newblock In {\em Advances in Neural Information Processing Systems}, pages
  1727--1735, 2014.

\bibitem{LSB14}
Andrew~S Lan, Christoph Studer, and Richard~G Baraniuk.
\newblock Matrix recovery from quantized and corrupted measurements.
\newblock In {\em Acoustics, Speech and Signal Processing (ICASSP), 2014 IEEE
  International Conference on}, pages 4973--4977. IEEE, 2014.

\bibitem{lin10}
Zhouchen Lin, Minming Chen, and Yi~Ma.
\newblock The augmented lagrange multiplier method for exact recovery of
  corrupted low-rank matrices.
\newblock {\em arXiv preprint arXiv:1009.5055}, 2010.

\bibitem{oymak2011simplified}
Samet Oymak, Karthik Mohan, Maryam Fazel, and Babak Hassibi.
\newblock A simplified approach to recovery conditions for low rank matrices.
\newblock In {\em Information Theory Proceedings (ISIT), 2011 IEEE
  International Symposium on}, pages 2318--2322. IEEE, 2011.

\bibitem{HPlabs08}
Rong Pan, Yunhong Zhou, Bin Cao, Nathan~N Liu, Rajan Lukose, Martin Scholz, and
  Qiang Yang.
\newblock One-class collaborative filtering.
\newblock In {\em Data Mining, 2008. ICDM'08. Eighth IEEE International
  Conference on}, pages 502--511. IEEE, 2008.

\bibitem{sensor10}
Swati Rallapalli, Lili Qiu, Yin Zhang, and Yi-Chao Chen.
\newblock Exploiting temporal stability and low-rank structure for localization
  in mobile networks.
\newblock In {\em Proceedings of the sixteenth annual international conference
  on Mobile computing and networking}, pages 161--172. ACM, 2010.

\bibitem{RFP07}
Benjamin Recht, Maryam Fazel, and Pablo~A Parrilo.
\newblock Guaranteed minimum-rank solutions of linear matrix equations via
  nuclear norm minimization.
\newblock {\em SIAM review}, 52(3):471--501, 2010.

\bibitem{SWY15}
Rayan Saab, Rongrong Wang, and {\"O}zg{\"u}r Y{\i}lmaz.
\newblock Quantization of compressive samples with stable and robust recovery.
\newblock {\em Applied and Computational Harmonic Analysis}, 2016.

\bibitem{SWY16}
Rayan Saab, Rongrong Wang, and {\"O}zg{\"u}r Y{\i}lmaz.
\newblock From compressed sensing to compressed bit-streams: practical
  encoders, tractable decoders.
\newblock {\em IEEE Transactions on Information Theory}, 2017.

\bibitem{Talagrand}
M.~Talagrand.
\newblock {\em The Generic Chaining}.
\newblock Springer-Verlag, 2005.

\bibitem{VershyninNotes}
Roman Vershynin.
\newblock {\em Introduction to the non-asymptotic analysis of random matrices},
  pages 210 -- 268.
\newblock Cambridge University Press, 2012.

\end{thebibliography}

\end{document}